\def\8u{\infty}
\newcommand{\dm}{\begin{displaymath}}
\newcommand{\md}{\end{displaymath}}
\newcommand{\ct}{\begin{center}}
\newcommand{\tc}{\end{center}}
\numberwithin{equation}{section} % 公式按章节编号
\newtheorem{theorem}{Theorem}
\newtheorem{definition}{Definition}
\newtheorem{lemma}{Lemma}
\newtheorem{remark}{Remark}
\newtheorem{corollary}{Corollary}
\begin{document}

%------------------------------------ title ---------------------
\title{{\bf  Sparse System Identification in Pairs of FIR and TM Bases}}
\author{
     \quad\ {\bf Dan Xiong$^1$, Li Chai$^2$, Jingxin Zhang$^3$ }\\
{\small 1. School of Information Science and Engineering} \\
{\small 2. Engineering Research Center of Metallurgical Automation and Measurement Technology}\\
{\small Wuhan University of Science and Technology, Hubei, Wuhan, 430081, China}\\
{\small 3. School of Software and Electrical Engineering}\\
{\small Swinburne University of Technology, Melbourne, VIC3122, Australia}}
\date{} % \today whenever the date is required
\maketitle
\pagestyle{empty}  % no page number for the second and the later pages
\thispagestyle{empty} % no page number for the first page

%----------------------------------- abstract ------------------
{\hspace{0mm}\bf {Abstract}}:\quad
This paper considers the reconstruction of a sparse coefficient vector $\theta$ for a rational transfer function,
under a pair of FIR and Takenaka-Malmquist (TM) bases and from a limited number of linear frequency-domain measurements.
We propose to concatenate a limited number of FIR and TM basis functions in the representation of the transfer function,
and prove the uniqueness of the sparse representation defined in the infinite dimensional function space
with pairs of FIR and TM bases.
The sufficient condition is given for replacing the $\ell_0$ optimal solution by the $\ell_1$ optimal solution using FIR and TM bases with random samples on the upper unit circle, as the foundation of reconstruction.
The simulations verify that $\ell_1$ minimization can reconstruct the coefficient vector $\theta$ with high probability.
It is shown that the concatenated FIR and TM bases give a much sparser representation,
with much lower reconstruction order than using only FIR basis functions and less dependency on the knowledge of the true system poles than using only TM basis functions.

%\\

\indent{\hspace{0mm}\bf Key words}:\quad sparse system identification; FIR basis; TM basis; $\ell_1$ optimization.
%\\

%\indent{\hspace{0mm}\bf 2000 Mathematics Subject Classification}:\quad 11B39; 11D61
%\\

%#####################################################%
\section{Introduction}

{\hspace{5mm}
System identification has a long history in control theory. %\cite{Eyk:74}
System identification using finite impulse response (FIR) model has been studied for many years.
FIR modeling corresponds to estimating the expansion coefficients of a partial expansion
in the standard orthonormal function basis $\{z^{-k}\}$.
The main advantage of FIR model is that the parameters
(the impulse response coefficients) appear linearly in the model,
leading to a simple estimation problem.
Many excellent works on the parameter estimation have been done \cite{Lev:60}, \cite{RaCrAl:78}
and some methods and algorithms have been well developed,
for example, the least mean square \cite{WiMcLa:76}, \cite{HuKa:94} and subspace identification \cite{Katayama:05} methods.
Although all these methods can achieve effectively system identification,
the main disadvantage of the FIR model is that in general one needs to estimate a large amount of
expansion coefficients if the pole of the transfer function is close to the unit circle
and hence the impulse response decays slowly, which will lead to a very high order reconstruction.
To overcome this problem, system identification using rational orthonormal basis functions with structures was introduced.

Between the 1920s and 1990s,
most of the research on using rational orthonormal basis functions was focused on
the construction using Laguerre functions with a single
(repeating for $k>1$) real pole $a \in (-1,1)$ \cite{Wah:13}, \cite{MaEkRa:98}, \cite{ChVe:99},
and Kautz functions having two complex conjugate poles \cite{Wah:94}, \cite{WaMa:96}.
Ninness et al. used an arbitrary sequence of poles, which led to the Takenaka-Malmquist (TM) basis functions \cite{NiGu:97}.
Since the work of \cite{NiGu:97}, the Generalized Orthogonal Basis function (GOBF) based construction was introduced into
the arena of systems.
Over the last twenty years, identification and control of linear stable dynamic systems using Orthonormal Rational Functions
(ORFs) have been widely used, see for instance
\cite{AkNi:98, Nin:96, WaPa:96, GuBu:03, NiGoWe:02, MiQi:12, ChMaZh:15, Oht:11, HiMe:12, TiSc:14}.

An LTI system can be well approximated with a small number of ORFs
if the poles in the TM basis are close to the true system poles \cite{Heu:05}.
If sufficient samples of the transfer function on the unit circle are acquired,
the coefficient estimation can be solved by least squares (LS) method.
However, it is generally difficult to identify the order and the poles of the transfer function in advance.
This limits the usefulness of ORF based system identification methods and
has inspired compressed sensing based FIR system identification in recent years.

Compressed sensing (CS)
\cite{Don:06}, \cite{CaRoTa:06a}, \cite{CaTa:06}, \cite{CaRoTa:06b}
is a new framework for simultaneous sampling and compression of signals.
It has drawn much attention since its advent several years ago,
and has been applied to the identification of sparse systems.
Sparse system identification using the least mean square (LMS) algorithm was discussed in
\cite{GuJiMe:09}, \cite{ChGuHe:09}, \cite{KaMiBa:11},
and the algorithm based on the projections onto weighted $\ell_1$ balls was proposed in \cite{KoSlTh:10}, \cite{SlKoTh:10}.
The essence of these methods is to find a sparse representation of the system
confined to a single basis and may not yield the sparsest solution.

It is well known from the CS literature that a signal may have a much sparser representation in an
overcomplete basis (redundant dictionary) consisting of concatenated orthogonal bases
\cite{ElBr:02}, \cite{DoEl:03}, \cite{GrNi:03}, \cite{MaCeWi:04}.
In the context of finite dimensional vector spaces,
\cite{ElBr:02} and \cite{DoEl:03} have presented and analyzed
the sparse representation of vector signals under a pair of orthonormal bases.
In the context of finite dimensional function spaces, \cite{Rau:10} has discussed the random sampling in
the bounded orthonormal systems with one orthonormal basis from the perspective of structured random matrix.

Inspired by these works,
this paper investigates the sparse system identification under a pair of FIR and TM bases
and from a limited number of linear measurements.
Here the identification is to reconstruct a sparse coefficient vector $\theta$ for a rational transfer function under such pairs.
In other word, the coefficient vector $\theta$ is the object of the identification.
The aim is to obtain a sparse representation with much fewer significant coefficients than using only FIR basis functions
and with weaker dependence on the true system poles than using only the ORFs,
and hence to overcome the drawbacks of these two types of bases.

Based on the analysis of \cite{XiChZh:17},
we show the uniqueness of sparse representation
for rational transfer functions in the infinite dimensional function space
with pairs of FIR and TM bases,
using the uniform bound of maximal absolute inner product of such pairs as an index.
We then derive a compressed sensing formulation for
finding the sparse representation of the rational transfer function
in the concatenated FIR and TM bases.
We further show that the replacement of $\ell_0$ optimization by $\ell_1$ optimization with randomly sampled frequency domain measurements and under a pair of FIR and TM bases is guaranteed with high probability.
Numerical experiments verify the effectiveness of the proposed identification framework.

The contributions of this paper are:

$\bullet$ Analysis on the uniqueness property of sparse representation of rational transfer function in the pairs of FIR and TM bases.

$\bullet$ A novel identification method for rational transfer function with the finite-order combination of FIR basis and TM basis.

$\bullet$ Sufficient conditions on the number of measurements needed to recover the sparse coefficient from the randomly sampled measurements by solving the $\ell_1$-minimization problem in the pairs of FIR and TM bases
and demonstration of the reconstruction performance of the proposed method.

The rest of this paper is organized as follows.
In Section 2,
the uniqueness of the sparse representation of rational transfer functions in the pairs of FIR and TM bases is given.
The sparse system identification using concatenated FIR and TM bases is given in Section 3.
Section 4 discusses computation issues of the proposed method.
Section 5 presents the simulation results, followed by conclusions in Section 6.

\section{Sparse representation of transfer functions in pairs of FIR and TM bases and Uniqueness Property}
{\hspace{5mm}
Let $H(z)$ be a proper, stable, real-rational transfer function with at least one nonzero pole.
Assume that $H(z)$ has a ``sparse'' representation under a pair of ORF bases,
$\{\phi_{k}(z), k=1, 2, \cdots, \}$ and $\{\psi_{l}(z), l=1, 2, \cdots\}$,
that is
$$
H(z)=\sum\limits_{k=1}^{\infty} \alpha_k \phi_k(z) + \sum\limits_{l=1}^{\infty} \beta_l \psi_{l}(z).
$$
In this section, we will show the uniqueness property of the sparse representation
with real valued sparse coefficients
$\alpha = [\alpha_1, \alpha_2, \cdots]^T$ and
$\beta = [\beta_1, \beta_2, \cdots]^T$.
Here the sparsity of $\alpha$ and $\beta$ is in the sense that
$\|\alpha\|_{0(\varepsilon)} \leq s_1$, $\|\beta\|_{0(\varepsilon)} \leq s_2$,
which means the rational transfer function $H(z)$ is $(\varepsilon, s_1+s_2)$-sparse in the pairs of orthonormal rational functions,
where $\|\cdot\|_{0(\varepsilon)}$ is the $\varepsilon$-0 norm defined as follows.

\begin{definition} \label{es}
For a fixed threshold $\varepsilon >0$ and an infinite sequence $\alpha=[\alpha_1, \alpha_2, \cdots]^T$
satisfying $\sum\limits_{k=1}^{\infty} |\alpha_k| < \infty$,
let
$$
N_{\varepsilon}(\alpha)=\min \{K: \sum_{k=K}^{\infty}|\alpha_k|\leq \varepsilon\}
$$
and define the
$\varepsilon$-support of $\alpha$ as
$$\Gamma_{\varepsilon}(\alpha)=\{k: |\alpha_k|\neq 0, 1\leq k < N_{\varepsilon}(\alpha)\},$$
and the cardinality of \ $\Gamma_{\varepsilon}(\alpha)$ as the $\varepsilon$-0 norm of $\alpha$,
denoted by $\| \alpha \|_{0(\varepsilon)}$.
\end{definition}

In this paper, we focus on the concatenation of FIR basis
$$\phi_k(z)=z^{-(k-1)},\quad k=1, 2, \cdots $$
and Takenaka-Malmquist basis (TM basis) \cite{Tak:25}, \cite{Mal:25}
\begin{equation} \label{tmb}
\psi_l(z):=\frac{\sqrt{1-|\xi_l|^2}}{z-\xi_l}\prod_{j=1}^{l-1}\frac{1-\bar{\xi}_j z}{z-\xi_j},\quad l=1, 2, \cdots,
\end{equation}
%\footnote
%{
%where $p_{N_i+l}=\xi_l$ for $1 \leq l \leq m_i$ with $N_i=\sum_{j=1}^{i-1} m_j$, $\xi_i\neq \xi_{i'}$,
%the parameter $m_i$ is the multiplicity with which the pole $\xi_i$ occurs.
%}
where the poles $\{\xi_{l}\} \subset \mathbb{D}=\{z|\,|z|< 1\}$ are given
and $\bar{\xi}_j$ is the complex conjugate of $\xi_j.$

The construction of TM basis holds for multiple poles and complex poles as well.
If any of the poles $\{\xi_l\}$ are chosen as complex, then the impulse responses of TM basis are complex-valued,
which is inappropriate. However, the construction of new basis functions which have the same complex poles but have real valued impulse responses can solve this problem, see \cite{NiGoWe:02} for details.
In addition, the necessary and sufficient condition for the completeness of TM basis functions is
$\sum\limits_{l=1}^{\infty} (1-|\xi_l|)=\infty$ \cite{NiGu:97}.
%$$\langle \phi_{k}, \phi_{l}\rangle=\frac{1}{2\pi} \int_{-\pi}^{\pi} \phi_{k}(e^{i\omega}) \overline{\phi_{l}(e^{i\omega})}d\omega.$$

Both bases are orthonormal in terms of the inner product
\begin{equation} \label{innerproduct}
\langle \phi_k(z), \phi_{k'}(z) \rangle
=\frac{1}{2\pi i} \oint_\mathbb{T} \phi_k(z) \overline{\phi_{k'}(z) }\frac{dz}{z}
=\frac{1}{2\pi} \int_0^{2\pi} \phi_k(e^{i\omega})\overline{\phi_{k'}(e^{i\omega})}d \omega,
\end{equation}
where $\mathbb{T}=\{z|\,|z|=1\}.$

We have presented in \cite{XiChZh:17} the uniqueness of sparse representation of transfer function
in pairs of general ORF bases with the representation coefficients satisfying
$$
{(\sqrt{\| \alpha \|_{0(\varepsilon)} }
+\varepsilon)}^2+{(\sqrt{\| \beta \|_{0(\varepsilon)} }+ \varepsilon)}^2
< \frac{1}{\mu},
$$
where
$\mu=\sup_{k,l} |\langle \phi_k(z), \psi_l(z)\rangle|$
is the mutual coherence of such two ORF bases.
The concept of mutual coherence for matrices was introduced by David Donoho and Michael Elad \cite{DoEl:03}.
The mutual coherence has been used extensively in the field of sparse representations of signals
since it is a key measure of the bound for the unique representation of a sparse signal
and the ability of suboptimal algorithms such as matching pursuit and basis pursuit to correctly identify the sparse signal.
Following the terminology of compressed sensing,
we denote $\mu$ the mutual coherence of two ORF bases $\{\phi_k(z)\}$ and $\{\psi_l(z)\}$.

Notice that FIR and TM bases are two special cases of ORF bases.
Express the TM basis in impulse response
\begin{equation} \label{Eq3.1}
\psi_l(z):=\sum_{d=0}^{\infty}a_{dl}z^{-d}, l=1,2,\cdots.
\end{equation}
Then the inner product of $\phi_k(z)$ and $\psi_l(z)$ is given by
$$
\langle \phi_k(z), \psi_l(z)\rangle
=\langle z^{-(k-1)}, \sum_{d=0}^{\infty}a_{dl}z^{-d}\rangle
= \sum_{d=0}^{\infty} \langle z^{-(k-1)}, z^{-d}\rangle a_{dl}
= a_{(k-1),l}.
$$
The last equation follows from the orthonormality of FIR bases $\{z^{-(k-1)}\}_{k=1}^{\infty}$.
Hence the mutual coherence of FIR and TM bases is
 %\begin{equation}\label{mu}
$$
\mu=\sup_{k,l} |\langle \phi_k(z), \psi_l(z)\rangle|=\sup_{k,l} |a_{(k-1),l}|=\sup_{d,l} |a_{dl}|.
$$
%\end{equation}
Denote the uniform bound of the maximal absolute impulse response of TM basis as
\begin{equation} \label{Eq3.2}
\tilde{\mu}=\sup_{d,l} |a_{dl}|.
\end{equation}
Using a similar proof of Theorem 2 in \cite{XiChZh:17},
we can establish the uniqueness property of the sparse representation of transfer function in pair of FIR and TM bases.

\begin{theorem} \label{FTU}
For a transfer function $H(z)$ with a representation in the concatenated FIR and TM bases
$$
H(z)=\sum_{k=1}^\infty \alpha_k z^{-(k-1)}+\sum_{l=1}^\infty \beta_l \psi_l(z),
$$
where $\psi_l(z)$ is given in (\ref{tmb}).
For a fixed thresholds $\varepsilon > 0$,
if the representation is sparse in the sense of $\varepsilon$-0 norm and
$$
{(\sqrt{\| \alpha \|_{0(\varepsilon)} }
+\varepsilon)}^2+{(\sqrt{\| \beta \|_{0(\varepsilon)} }+ \varepsilon)}^2
< \frac{1}{\tilde{\mu}}
$$
with $\tilde{\mu}$ as defined in (\ref{Eq3.2}),
then this sparse representation is unique.
\end{theorem}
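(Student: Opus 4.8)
\section*{Proof proposal}

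The plan is to argue by contradiction along the uncertainty-principle route for pairs of orthonormal bases, using the identity $\mu=\tilde\mu$ established just before the statement. Suppose $H(z)$ admitted two distinct representations $(\alpha,\beta)$ and $(\alpha',\beta')$, both $\varepsilon$-sparse and both obeying the stated bound. Setting $\gamma=\alpha-\alpha'$ and $\delta=\beta-\beta'$, subtraction gives $\sum_k \gamma_k z^{-(k-1)} = -\sum_l \delta_l \psi_l(z)=:g(z)$, a single function expanded in each of the two orthonormal systems. Since the coefficient sequences are absolutely summable (Definition~\ref{es}), every series below converges absolutely and the interchanges of summation are legitimate, so the manipulations are valid in the infinite-dimensional function space.

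First I would exploit orthonormality of each basis in the inner product (\ref{innerproduct}): Parseval yields $\|g\|^2=\|\gamma\|_2^2=\|\delta\|_2^2$, so in particular $\|\gamma\|_2=\|\delta\|_2=\|g\|$. If $g=0$ then both $\gamma$ and $\delta$ vanish by orthonormality, contradicting distinctness; hence $\|g\|>0$. Next I would expand $\|g\|^2=\langle g,g\rangle$ by writing $g$ through $\gamma$ on one side and through $\delta$ on the other, and bound the resulting double sum by the mutual coherence, $\|g\|^2 \le \sum_{k,l}|\gamma_k|\,|\delta_l|\,|\langle\phi_k,\psi_l\rangle| \le \tilde\mu\,\|\gamma\|_1\|\delta\|_1$. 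Combining the two facts gives the key inequality $\|\gamma\|_2\|\delta\|_2 \le \tilde\mu\,\|\gamma\|_1\|\delta\|_1$.

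The second ingredient is an $\ell_1$--$\ell_2$ comparison tailored to the $\varepsilon$-0 norm. For an absolutely summable $v$ I would split $\|v\|_1$ over its $\varepsilon$-support $\Gamma_\varepsilon(v)$ and the complementary tail: Cauchy--Schwarz bounds the first part by $\sqrt{\|v\|_{0(\varepsilon)}}\,\|v\|_2$, while the tail contributes at most $\varepsilon$ by the very definition of $N_\varepsilon(v)$, giving $\|v\|_1 \le \sqrt{\|v\|_{0(\varepsilon)}}\,\|v\|_2+\varepsilon$. After normalizing so that $\|g\|=\|\gamma\|_2=\|\delta\|_2=1$, applying this bound to $\gamma$ and to $\delta$ converts the key inequality into $1\le \tilde\mu(\sqrt{s_1}+\varepsilon)(\sqrt{s_2}+\varepsilon)$, and the elementary estimate $(\sqrt{s_1}+\varepsilon)(\sqrt{s_2}+\varepsilon)\le \tfrac12[(\sqrt{s_1}+\varepsilon)^2+(\sqrt{s_2}+\varepsilon)^2]$ then contradicts the hypothesis $(\sqrt{s_1}+\varepsilon)^2+(\sqrt{s_2}+\varepsilon)^2<1/\tilde\mu$.

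The delicate point, and the step I expect to be the main obstacle, is the bookkeeping of the $\varepsilon$-support through the difference and the normalization. The $\varepsilon$-0 norm is not subadditive, so the $\varepsilon$-supports of $\gamma=\alpha-\alpha'$ and $\delta=\beta-\beta'$ must be controlled in terms of $s_1$ and $s_2$, and since the tail threshold $\varepsilon$ is absolute it does not rescale when $g$ is normalized to unit norm. One therefore has to track how the rescaling moves $N_\varepsilon$ and the support, and verify that the comfortable factor of two between the $2/\tilde\mu$ produced by the AM--GM estimate and the sharper $1/\tilde\mu$ assumed in the hypothesis is enough to absorb the union of the two supports. Handling this tail-and-normalization accounting carefully, rather than the algebra of the coherence bound, is where the real work lies, exactly as in the proof of Theorem~2 of \cite{XiChZh:17} that the statement invokes.
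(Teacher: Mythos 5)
Your strategy is the right one, and it is essentially the argument that the paper itself never spells out: the paper's entire in-text ``proof'' of Theorem~\ref{FTU} is the inner-product computation $\langle z^{-(k-1)},\psi_l\rangle=a_{(k-1),l}$, which identifies the mutual coherence of the FIR--TM pair with $\tilde{\mu}$ of (\ref{Eq3.2}), followed by an appeal to ``a similar proof of Theorem 2 in \cite{XiChZh:17}''. Your null-space decomposition $g=\Phi\gamma=-\Psi\delta$, the Parseval identity $\|g\|^2=\|\gamma\|_2^2=\|\delta\|_2^2$, the coherence bound $\|g\|^2\le\tilde{\mu}\,\|\gamma\|_1\|\delta\|_1$, and the comparison $\|v\|_1\le\sqrt{\|v\|_{0(\varepsilon)}}\,\|v\|_2+\varepsilon$ are all correct and form the standard skeleton of that cited argument.

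However, the proposal has a genuine gap, precisely the one you flag in your last paragraph but never close, and it is not a routine verification. What your chain of inequalities actually yields, after dividing by $\|g\|^2$, is $1\le\tilde{\mu}\bigl(\sqrt{\|\gamma\|_{0(\epsilon_1)}}+\epsilon_1/\|g\|\bigr)\bigl(\sqrt{\|\delta\|_{0(\epsilon_2)}}+\epsilon_2/\|g\|\bigr)$ for whatever thresholds $\epsilon_1,\epsilon_2$ you apply to the \emph{difference} sequences --- not the inequality $1\le\tilde{\mu}(\sqrt{s_1}+\varepsilon)(\sqrt{s_2}+\varepsilon)$ that you write down, in which the sparsities of the original representation $(\alpha,\beta)$ appear. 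To bridge the two, three losses must be absorbed simultaneously: (i) the support of a difference is only controlled by a union, $\|\alpha-\alpha'\|_{0(2\varepsilon)}\le\|\alpha\|_{0(\varepsilon)}+\|\alpha'\|_{0(\varepsilon)}$, and only at the doubled threshold $2\varepsilon$, since the tails add (at a smaller threshold $\|\alpha-\alpha'\|_{0(\cdot)}$ can be arbitrarily large, picking up the tail entries of both sequences); (ii) the threshold is an absolute constant and does not rescale under your normalization $\|g\|=1$, so what actually appears is $2\varepsilon/\|g\|$, and nothing in the hypotheses prevents $\|g\|$ from being arbitrarily small; (iii) the ``comfortable factor of two'' from AM--GM does not exist --- it is exactly consumed by the support union, as the $\varepsilon=0$ corollary already shows: there the argument reads $\|\gamma\|_0+\|\delta\|_0\le(s_1+s_2)+(s_1'+s_2')<2/\tilde{\mu}$ against the uncertainty bound $\ge 2/\tilde{\mu}$, with nothing to spare. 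So the concluding contradiction is asserted with the wrong sparsities substituted, and the bookkeeping needed to justify the substitution is both missing and, at least along the naive route you sketch, obstructed by the normalization issue in (ii). Repairing this requires either a lower bound on $\|g\|$, a restatement with thresholds that scale with the function, or the specific normalization conventions used in Theorem 2 of \cite{XiChZh:17}; as written, the proposal is an outline of the correct approach rather than a proof.
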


\begin{remark}
For the pair of general ORF bases, if the numbers of such two bases are given as $n_1$ and $n_2$, respectively,
then the complexity of the mutual coherence $\mu$ of such two bases is $O(n_1n_2)$.
However from (\ref{Eq3.2}), for the pair of FIR and TM bases,
the analytic formula of the mutual coherence $\tilde{\mu}$ is given,
which shows that $\tilde{\mu}$ only depends on the maximal absolute impulse response of TM basis functions,
thus the complexity of $\tilde{\mu}$ is the number of TM basis functions $O(n_2)$.
That is, the mutual coherence of FIR and TM bases is easier to compute than the general ORF bases.
\end{remark}

When $\varepsilon=0$, the $\varepsilon$-0 norm reduces to the standard definition of 0-norm.
Then we have the following Corollary.

\begin{corollary}
If a transfer function $H(z)$ has a sparse representation in the concatenated FIR and TM bases
$$
H(z)=\sum_{k=1}^\infty \alpha_k z^{-(k-1)}+\sum_{l=1}^\infty \beta_l \psi_l(z)
\ \mbox{and} \
\| \alpha \|_{0}+\| \beta \|_{0} < \frac{1}{\tilde{\mu}}
$$
with $\tilde{\mu}$ as defined in (\ref{Eq3.2}),
then this sparse representation is unique.
\end{corollary}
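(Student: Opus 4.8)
The plan is to read the Corollary as the $\varepsilon=0$ specialization of Theorem~\ref{FTU} and to verify that passing from $\varepsilon>0$ to $\varepsilon=0$ is harmless. The hypothesis $\|\alpha\|_0+\|\beta\|_0<\infty$ means $\alpha$ and $\beta$ have only finitely many nonzero entries, so $\sum_k|\alpha_k|<\infty$ and $\sum_l|\beta_l|<\infty$ and Definition~\ref{es} applies. Let $M_\alpha$ be the largest index with $\alpha_{M_\alpha}\neq 0$ (and $M_\beta$ similarly). For every $0<\varepsilon<\min\{|\alpha_{M_\alpha}|,|\beta_{M_\beta}|\}$ the tail $\sum_{k\ge M_\alpha}|\alpha_k|=|\alpha_{M_\alpha}|$ exceeds $\varepsilon$, so $N_\varepsilon(\alpha)=M_\alpha+1$ and the $\varepsilon$-support $\Gamma_\varepsilon(\alpha)$ is the full support of $\alpha$; hence $\|\alpha\|_{0(\varepsilon)}=\|\alpha\|_0$, and likewise $\|\beta\|_{0(\varepsilon)}=\|\beta\|_0$. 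Thus the left-hand side of the hypothesis of Theorem~\ref{FTU} equals $(\sqrt{\|\alpha\|_0}+\varepsilon)^2+(\sqrt{\|\beta\|_0}+\varepsilon)^2$ for all such $\varepsilon$, and this quantity decreases continuously to $\|\alpha\|_0+\|\beta\|_0$ as $\varepsilon\to 0^+$.

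First I would exploit this continuity. Since the Corollary assumes the \emph{strict} inequality $\|\alpha\|_0+\|\beta\|_0<1/\tilde{\mu}$, there is $\varepsilon_0>0$ with $(\sqrt{\|\alpha\|_0}+\varepsilon)^2+(\sqrt{\|\beta\|_0}+\varepsilon)^2<1/\tilde{\mu}$ for all $0<\varepsilon<\varepsilon_0$. Shrinking $\varepsilon_0$ so that also $\varepsilon_0<\min\{|\alpha_{M_\alpha}|,|\beta_{M_\beta}|\}$, the full hypothesis of Theorem~\ref{FTU} holds verbatim for any fixed $\varepsilon\in(0,\varepsilon_0)$, with the $\varepsilon$-0 norms coinciding with the ordinary $0$-norms. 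The theorem then yields uniqueness of the representation, which is exactly the assertion of the Corollary once the $\varepsilon$-0 norms are identified with $\|\alpha\|_0$ and $\|\beta\|_0$.

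As a self-contained alternative that sidesteps the limit, I would re-run the uncertainty argument directly at $\varepsilon=0$, where all tail terms vanish. Supposing $H$ had a second representation $(\alpha',\beta')$ and setting $\gamma=\alpha-\alpha'$, $\delta=\beta-\beta'$, we get $\sum_k\gamma_k z^{-(k-1)}=-\sum_l\delta_l\psi_l(z)$. Orthonormality under (\ref{innerproduct}) gives $\|\gamma\|_2=\|\delta\|_2$ and also forces $\gamma=0\Leftrightarrow\delta=0$; expanding one factor of $\|\gamma\|_2^2$ through this identity and using $|\langle\phi_k,\psi_l\rangle|=|a_{(k-1),l}|\le\tilde{\mu}$ (already computed before the theorem) yields $\|\gamma\|_2^2\le\tilde{\mu}\,\|\gamma\|_1\|\delta\|_1$ and, symmetrically, $\|\delta\|_2^2\le\tilde{\mu}\,\|\gamma\|_1\|\delta\|_1$. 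With Cauchy--Schwarz on the finite supports, $\|\gamma\|_1\le\sqrt{\|\gamma\|_0}\,\|\gamma\|_2$ and $\|\delta\|_1\le\sqrt{\|\delta\|_0}\,\|\delta\|_2$, multiplying the two bounds gives $\|\gamma\|_0\,\|\delta\|_0\ge 1/\tilde{\mu}^2$ whenever $(\gamma,\delta)\neq 0$, so by the arithmetic--geometric mean inequality $\|\gamma\|_0+\|\delta\|_0\ge 2/\tilde{\mu}$. Since $\|\gamma\|_0+\|\delta\|_0\le(\|\alpha\|_0+\|\beta\|_0)+(\|\alpha'\|_0+\|\beta'\|_0)$ and $\|\alpha\|_0+\|\beta\|_0<1/\tilde{\mu}$, any competing representation must have total support exceeding $1/\tilde{\mu}$; in particular no other representation lies below that threshold, giving the claimed uniqueness.

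The step I expect to require the most care is not the algebra but the conceptual legitimacy of setting $\varepsilon=0$ in a result stated for $\varepsilon>0$: I must confirm that $N_0(\alpha)$ and $N_0(\beta)$ are finite, which is guaranteed precisely by $\|\alpha\|_0,\|\beta\|_0<\infty$, and that the $\varepsilon$-support stabilizes to the true support for small $\varepsilon$, which is the observation of the first paragraph. The direct argument of the third paragraph avoids even this point, since the difference sequences $\gamma,\delta$ are finitely supported and only finite sums together with the one absolutely convergent expansion (\ref{Eq3.1}) are manipulated. Once finiteness of the supports is noted, the $\varepsilon=0$ case inherits the conclusion of Theorem~\ref{FTU} without loss.
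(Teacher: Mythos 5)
Your proposal is correct, and it actually goes beyond what the paper does: the paper offers no proof of this Corollary at all, simply remarking that "when $\varepsilon=0$, the $\varepsilon$-0 norm reduces to the standard definition of 0-norm" and treating the statement as an immediate specialization of Theorem~\ref{FTU} (whose own proof is in turn deferred to the cited reference \cite{XiChZh:17}). Your first route is a rigorous version of exactly that specialization: you correctly notice that Theorem~\ref{FTU} is stated only for $\varepsilon>0$, verify that for finitely supported $\alpha,\beta$ the $\varepsilon$-support stabilizes to the true support once $\varepsilon$ is below the smallest nonzero magnitude, and use the \emph{strict} inequality plus continuity of $(\sqrt{s}+\varepsilon)^2$ to land inside the theorem's hypothesis for small $\varepsilon>0$ --- this is the same approach as the paper, with the limiting step the paper glosses over made explicit (one small refinement worth stating: to rule out a \emph{competing} sparse representation $(\alpha',\beta')$ you should pick $\varepsilon$ small enough for both representations simultaneously, which your framework permits). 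Your second route is genuinely different from anything in the paper: a self-contained Elad--Bruckstein-type uncertainty argument on the difference $(\gamma,\delta)$, using orthonormality to get $\|\gamma\|_2=\|\delta\|_2$, the bound $|\langle\phi_k,\psi_l\rangle|\le\tilde{\mu}$, and Cauchy--Schwarz to conclude $\|\gamma\|_0+\|\delta\|_0\ge 2/\tilde{\mu}$ for any nonzero annihilating pair, whence any alternative representation has 0-norm sum exceeding $1/\tilde{\mu}$. What this buys is significant: it makes the Corollary independent of both Theorem~\ref{FTU} and the external reference, and it makes precise in what sense the representation is "unique" (no other representation exists below the threshold $1/\tilde{\mu}$), a point the paper leaves implicit.
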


The impulse responses $\{a_{dl}\}$ in (\ref{Eq3.1}),
which determine the value of $\tilde{\mu}$,
can be obtained by the following theorem for $\psi_l(z)$ with distinct poles.

\begin{theorem}
The Laurent expansion of the TM basis function $\psi_{l}(z)$ with distinct poles $\xi_j (j=1, 2, \cdots, l)$ in the annulus
$\{z|\max\limits_{1 \le j \le l} |\xi_j| < |z|< 2\}$ is
$$\psi_{l}(z):=\sum_{d=0}^{\infty}a_{dl}z^{-d}, \quad l=1, 2, \cdots, $$
where
\[
a_{dl}
= \sqrt{1-|\xi_l|^2} \sum_{j'=1}^l \xi_{j'}^{d-1} \frac{\prod_{j=1}^{l-1}(1-\bar{\xi}_j \xi_{j'})}{\prod_{j=1, j\neq j'}^{l}(\xi_{j'}-\xi_j)},
\]
for $d=1, 2, \cdots,$
and
$a_{0l}=0$ for all $l$.
\end{theorem}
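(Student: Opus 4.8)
The plan is to reduce $\psi_l(z)$ to a single strictly proper rational function and then expand it by partial fractions. First I would combine the factors in (\ref{tmb}) into the form
$$
\psi_l(z)=\sqrt{1-|\xi_l|^2}\,\frac{\prod_{j=1}^{l-1}(1-\bar{\xi}_j z)}{\prod_{j=1}^{l}(z-\xi_j)},
$$
observing that the numerator has degree at most $l-1$ while the denominator has degree $l$, so the rational part is strictly proper. This already forces the absence of a constant term in the $z^{-1}$-expansion, giving $a_{0l}=0$ for all $l$.

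Since the poles $\xi_1,\ldots,\xi_l$ are assumed distinct, the next step is the partial fraction decomposition
$$
\frac{\prod_{j=1}^{l-1}(1-\bar{\xi}_j z)}{\prod_{j=1}^{l}(z-\xi_j)}=\sum_{j'=1}^{l}\frac{c_{j'}}{z-\xi_{j'}},
\qquad
c_{j'}=\frac{\prod_{j=1}^{l-1}(1-\bar{\xi}_j \xi_{j'})}{\prod_{j=1,\,j\neq j'}^{l}(\xi_{j'}-\xi_j)},
$$
where each $c_{j'}$ is computed as the residue at the simple pole $\xi_{j'}$, i.e. the limit of $(z-\xi_{j'})$ times the rational part as $z\to\xi_{j'}$, using the standard formula $N(\xi_{j'})/\prod_{j\neq j'}(\xi_{j'}-\xi_j)$.

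In the annulus $\max_{1\le j\le l}|\xi_j|<|z|<2$ every pole satisfies $|\xi_{j'}/z|<1$, so I would expand each term as a geometric series in $z^{-1}$,
$$
\frac{1}{z-\xi_{j'}}=\frac{1}{z}\sum_{d=0}^{\infty}\Big(\frac{\xi_{j'}}{z}\Big)^{d}=\sum_{d=1}^{\infty}\xi_{j'}^{\,d-1}z^{-d},
$$
which converges absolutely and uniformly on compact subsets of the annulus. This absolute convergence licenses interchanging the finite sum over $j'$ with the infinite sum over $d$; collecting the coefficient of $z^{-d}$ then yields
$$
a_{dl}=\sqrt{1-|\xi_l|^2}\sum_{j'=1}^{l}c_{j'}\,\xi_{j'}^{\,d-1},
$$
which is the asserted formula after substituting $c_{j'}$. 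Uniqueness of Laurent coefficients on the annulus identifies this with the expansion $\sum_{d}a_{dl}z^{-d}$.

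No single step is a serious obstacle; the points requiring care are verifying that the numerator degree is strictly below the denominator degree (so partial fractions carry no polynomial part and $a_{0l}=0$), and justifying the termwise geometric expansion together with the interchange of the two summations through absolute convergence on the annulus. Edge cases such as small $l$ and vanishing poles $\xi_j=0$ are absorbed by the convention that empty products equal $1$.
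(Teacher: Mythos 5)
Your proposal is correct, and it reaches the formula by a genuinely different route than the paper. The paper works coefficient by coefficient: for each $d\ge 1$ it writes $a_{dl}$ as the contour integral $\frac{1}{2\pi i}\oint_C \psi_l(z)z^{d-1}\,dz$ (the Laurent coefficient formula) and evaluates it with the residue theorem, computing $\sum_{j'}\mathrm{Res}\bigl[z^{d-1}\psi_l(z),\xi_{j'}\bigr]$ by the simple-pole limit formula. You instead perform a single algebraic step --- the partial fraction decomposition of the strictly proper rational function $\psi_l$ --- and then expand each term $1/(z-\xi_{j'})$ as a geometric series in $z^{-1}$ on the annulus, collecting coefficients; the interchange of the finite sum over $j'$ with the series over $d$ is immediate. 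The underlying algebra is the same (your $c_{j'}=N(\xi_{j'})/\prod_{j\neq j'}(\xi_{j'}-\xi_j)$ is exactly the residue the paper computes, with $\xi_{j'}^{d-1}$ factored out), but your argument avoids invoking the contour-integral machinery for every $d$, needing only partial fractions, the geometric series, and uniqueness of the Laurent expansion. It also buys a cleaner treatment of the edge case: $a_{0l}=0$ follows transparently from strict properness (numerator degree $\le l-1$ versus denominator degree $l$), whereas the paper simply asserts this as obvious; note that for $d=0$ the paper's integrand $\psi_l(z)z^{-1}$ acquires an extra pole at the origin, so their residue computation as written does not directly cover that case, while yours does.
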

\begin{proof}
In the annulus $\{z|\max\limits_{1 \le j \le l} |\xi_j| < |z|< 2\}$,
from the Laurent expansion, for $d=1, 2 , \cdots$

1) $l=1$
\begin{eqnarray*}
a_{d1}
&=&\frac{1}{2\pi i} \oint_C \frac{\psi_1(z)}{z^{-d+1}}dz
= Res\left[\frac{\psi_1(z)}{z^{-d+1}},\xi_{1}\right]
= \lim_{z\rightarrow \xi_{1}}(z-\xi_{1}) z^{d-1} \frac{\sqrt{1-|\xi_1|^2}}{z-\xi_1}\\
&=& \sqrt{1-|\xi_1|^2} {\xi_1}^{d-1},
\end{eqnarray*}
where Res$[\cdot, \cdot]$ denotes the Residue of a complex function.

2) $l\geq 2$
\begin{eqnarray*}
a_{dl}&=&\frac{1}{2\pi i} \oint_C \frac{\psi_l(z)}{z^{-d+1}}dz
= \sum_{j'=1}^l Res \left[\frac{\psi_l(z)}{z^{-d+1}},\xi_{j'} \right]\\
&=&\sum_{j'=1}^l Res \left[z^{d-1} \frac{\sqrt{1-|\xi_l|^2}}{z-\xi_l}\prod_{j=1}^{l-1}\frac{1-\bar{\xi}_j z}{z-\xi_j},\xi_{j'} \right]\\
&=& \sum_{j'=1}^l \lim_{z\rightarrow \xi_{j'}}(z-\xi_{j'}) z^{d-1} \frac{\sqrt{1-|\xi_l|^2}}{1-\bar{\xi}_l z}\prod_{j=1}^{l}\frac{1-\bar{\xi}_j z}{z-\xi_j}\\
&=& \sum_{j'=1}^l \xi_{j'}^{d-1} \frac{\sqrt{1-|\xi_l|^2}}{1-\bar{\xi}_l \xi_{j'}}\frac{\prod_{j=1}^{l}(1-\bar{\xi}_j \xi_{j'})}{\prod_{j=1, j\neq j'}^{l}(\xi_{j'}-\xi_j)}\\
&=& \sqrt{1-|\xi_l|^2} \sum_{j'=1}^l \xi_{j'}^{d-1} \frac{\prod_{j=1}^{l-1}(1-\bar{\xi}_j \xi_{j'})}{\prod_{j=1, j\neq j'}^{l}(\xi_{j'}-\xi_j)}.
\end{eqnarray*}
Denote $\frac{\prod_{j=1}^{l-1}(1-\bar{\xi}_j \xi_{j'})}{\prod_{j=1, j\neq j'}^{l}(\xi_{j'}-\xi_j)}=1$ for $l=1$.
Then the above derivations can be unified as
$$
a_{dl}
= \sqrt{1-|\xi_l|^2} \sum_{j'=1}^l \xi_{j'}^{d-1} \frac{\prod_{j=1}^{l-1}(1-\bar{\xi}_j \xi_{j'})}{\prod_{j=1, j\neq j'}^{l}(\xi_{j'}-\xi_j)},
$$
for $l, d=1, 2 , \cdots.$
Further, it is obvious that $a_{0l}=0$ for all $l$.
\end{proof}

\section{Sparse System Identification using concatenated FIR and TM bases}
{\hspace{5mm}
Theorem \ref{FTU} shows that if the representation coefficient vector
$
\theta=[\alpha^T \quad \beta^T]^T
$
is sparse enough,
the representation is unique.
In this section,
we will propose a reconstruction algorithm based on compressed sensing
to reconstruct $H(z)$ with a small fraction of the measurements of $H(z)$ on the unit circle.
Precisely, define
$$
T_N:=\{z_{r}=e^{2\pi i (r-1)/N},\ r=1,2,\cdots,N\}.
$$
We focus on the underdetermined case
with only a few of the components of $\{H(z_r),  r=1, 2, \cdots, N\}$ sampled or observed.
That is, only a small fraction of $T_N$ is known.
Given a subset $\Omega \subset \{1,2,\cdots,N\}$ of size $|\Omega|=m$ (far less than the number of the basis functions),
the goal is to reconstruct the representation coefficients and hence the transfer function $H(z)$
from the much shorter $m$-dimensional measurements $\{H(z_r), r \in \Omega\}.$

Now we will restate this problem in a matrix form.
Combining with Definition 1,
$H(z)$ can be rewritten as
$$
H(z)
=\sum\limits_{k=1}^{n_1} \alpha_k z^{-(k-1)} + \sum\limits_{l=1}^{n_2} \beta_l \psi_{l}(z) + \Delta_1 +\Delta_2,
$$
where $n_1=N_{\varepsilon}(\alpha)-1$ and $n_2=N_{\varepsilon}(\beta)-1$,
$\Delta_1=\sum\limits_{k=n_1+1}^{\infty} \alpha_k z^{-(k-1)}$
and
$\Delta_2=\sum\limits_{k=n_2+1}^{\infty} \beta_l \psi_l(z)$.

By simple calculation, we have
$$
\|\Delta_1\|^2
= \langle \sum\limits_{k=n_1+1}^{\infty}  \alpha_k z^{-(k-1)}, \sum\limits_{k=n_1+1}^{\infty} \alpha_k z^{-(k-1)}\rangle
= \sum\limits_{k=n_1+1}^{\infty}  |\alpha_k|^2
\leq \varepsilon^2.
$$
Similarly, we have
$
\|\Delta_2\|^2 \leq \varepsilon^2.
$

Denote $\Delta=\Delta_1+\Delta_2$, then we have
$$
\|\Delta\| \leq \sqrt{2(\|\Delta_1\|^2 + \|\Delta_2\|^2)} \leq 2 \varepsilon.
$$
Now the transfer function $H(z)$ can be simplified as
\begin{equation} \label{simplever}
H(z)
=\sum\limits_{k=1}^{n_1} \alpha_k z^{-(k-1)} + \sum\limits_{l=1}^{n_2} \beta_l \psi_{l}(z) + \Delta,
\end{equation}
with $\|\Delta\| \leq 2 \varepsilon$.
With a little bit abuse of the notation,
the unknown coefficients to be determined here are denoted as
$\alpha=[\alpha_1, \alpha_2, \cdots, \alpha_{n1}]^T$ and
$\beta=[\beta_1, \beta_2, \cdots, \beta_{n2}]^T$.

Due to the arbitrariness of $\varepsilon$, the norm of the term $\Delta$ can be arbitrarily small,
and the term $\Delta=0$ when $\varepsilon$ is exactly zero.
In the sequel, we first discuss the equation (\ref{simplever}) with the term $\Delta$ omitted.

Define
$[\Phi \quad \Psi]$ to be a composite sample matrix with its $r$-th ($r=1, 2, \cdots, N$) row satisfying
\begin{equation} \label{total sample matrix}
[\Phi \quad \Psi]_r:= [1, z_r^{-1},\cdots, z_r^{-n_1+1}, \psi_1(z_r), \cdots, \psi_{n_2}(z_r)]
\end{equation}
and
\begin{equation*} %\label{total measurements}
H:=[H(z_1), H(z_2), \cdots, H(z_N)]^T.
\end{equation*}
\noindent
Then
\begin{equation*} %\label{matrix form}
H=
[\Phi \quad \Psi]
\begin{bmatrix}
\alpha\\
\beta
\end{bmatrix},
\end{equation*}
where
$$
\Phi=
\begin{bmatrix}
1 & z_1^{-1} & \cdots & z_1^{-n_1+1} \\
1 & z_2^{-1} & \cdots & z_2^{-n_1+1}\\
\cdots & \cdots & \cdots & \cdots\\
1 & z_N^{-1} & \cdots & z_N^{-n_1+1}
\end{bmatrix}.
$$

\noindent
We randomly select the subset $\Omega$ of size $m (<<n_1+n_2)$ drawn from
the uniform distribution over the index set $\{1,2,\cdots,N\}$,
and denote the measurement by
%\begin{equation}\label{Eq4.2}
$$
H_{\Omega}=[\Phi \quad \Psi]_{\Omega}
\begin{bmatrix}
\alpha\\
\beta
\end{bmatrix},
$$
%\end{equation}
where $H_{\Omega}$ is the $m \times 1$ vector consisting of $\{H(z_r),  r \in \Omega\}$,
and $[\Phi \quad \Psi]_{\Omega}$ is the $m \times (n_1+n_2)$ matrix
with the $r$-th row $[\Phi \quad \Psi]_r, r \in \Omega$.

As $[\Phi \quad \Psi]_{\Omega}$ is the concatenation of two bases,
the representation is not unique in general.
However, as shown in Theorem 1,
if the representation is sufficiently sparse,
the uniqueness of the representation is guaranteed .
The goal is to find the sparsest representation from the $\ell_0$ minimization
$$
(P_0): \
\min_{\alpha, \beta} \left\|
\begin{bmatrix}
\alpha\\
\beta
\end{bmatrix}
\right \|_{0}
 \ \mbox{subject to} \  H_{\Omega} = [\Phi \quad \Psi]_{\Omega}
\begin{bmatrix}
\alpha\\
\beta
\end{bmatrix},
$$
which is an infeasible search problem \cite{Nat:06}.
An alternative approach is to solve the  $\ell_1$ minimization problem (Basis Pursuit)
\cite{Don:06}, \cite{CaRoTa:06a}, \cite{ChDoSa:01}
\begin{equation} \label{Eq4.3}
(P_1): \
\min_{\alpha, \beta} \left \|
\begin{bmatrix}
\alpha\\
\beta
\end{bmatrix}
\right \|_{1}
\  \mbox{subject to} \  H_{\Omega} = [\Phi \quad \Psi]_{\Omega}
\begin{bmatrix}
\alpha\\
\beta
\end{bmatrix},
\end{equation}
which can be solved by linear programming or second order cone program \cite{Rau:10}, \cite{BoVa:04}.

Further, taking the data with small perturbations into consideration,
the measurement is given by
%\begin{equation}\label{Eq4.4}
$$
H_{\Omega}=[\Phi \quad \Psi]_{\Omega}
\begin{bmatrix}
\alpha\\
\beta
\end{bmatrix}  + \eta.
$$
%\end{equation}
Here the small perturbations $\eta$ can be either the transfer functions that are not exactly sparse but
nearly sparse (compressible), or the noise in the sampling process,
and is bounded by $\|\eta\|_2 \leq \epsilon$.
The corresponding approach is called Basis Pursuit Denoising (BPDN)
\begin{equation} \label{Eq4.5}
\min_{\alpha, \beta} \left \|
\begin{bmatrix}
\alpha\\
\beta
\end{bmatrix} \right \|_{1}
\ \mbox{subject to} \ \left \|H_{\Omega} - [\Phi \quad \Psi]_{\Omega}
\begin{bmatrix}
\alpha\\
\beta
\end{bmatrix} \right \|_2   \leq  \epsilon.
\end{equation}
%$
%\|\eta\|_2
%=\sqrt{\sum\limits_{r \in \Omega}|\eta_r|_2^2}
%\leq \sqrt{\sum\limits_{r \in \Omega} 4 \varepsilon^2}
%=2 \sqrt{m} \varepsilon
%$,
%so we can take $\epsilon=2 \sqrt{m} \varepsilon$.
For the sparse representation using only one basis,
compressed sensing theory has presented
the equivalence of $\ell_0$ optimization and $\ell_1$ minimization when the representation is sufficiently sparse \cite{Nat:06}, \cite{ChDoSa:01},
and has provided the sufficient conditions on the number of measurements needed to
recover the sparse coefficient from the randomly sampled measurements
by solving the $\ell_1$-minimization problem \cite{Rau:10}, \cite{Nat:06}.

For the setting of (\ref{Eq4.3}) concerning two bases,
\cite{XiChZh:17} has presented the lower bound of the number of measurements
which guarantees the replacement of $\ell_0$ optimization $(P_0)$ by $\ell_1$ optimization $(P_1)$
under a pair of general ORF bases
for a fixed (but arbitrary) support.
As FIR and TM bases are special cases of ORF bases,
the replacement holds as well.

To present the sufficient condition on the number of measurements required for the sparse
reconstruction by $\ell_1$ optimization in pairs of FIR and TM bases with random samples,
we first present the orthonormality property of $[\Phi \quad \Psi]$,
which directly determines the mutual coherence $\mu(\Phi, \Psi)$ of matrices $\Phi$ and $\Psi$.
$\mu(\Phi, \Psi)$, as a key index in the reconstruction, is defined as
\begin{equation} \label{muco}
\mu ( \Phi, \Psi )=\max_{k, l}\frac{|\langle \Phi_k, \Psi_l \rangle|}{\|\Phi_k\| \, \|\Psi_l\|},
\end{equation}
where $\Phi_k$, $\Psi_l$ denote the $k$-th, $l$-th column of $\Phi$ and $\Psi$, respectively.

\begin{theorem} \label{ThOrho}
When $N$ is sufficiently large, the composite sampling matrix $[\Phi \quad \Psi]$ satisfies:

(i) $\Phi^{*} \Phi \approx N I_{n_1}$,
where $^*$ is the conjugate transpose, $I_{n_1}$ is the identity matrix of dimension $n_1$.

(ii) $\Psi^{*} \Psi \approx N I_{n_2}$.

(iii) $
\Phi^{*} \Psi
= (\sum_{r=1}^N \psi_k(z_r) \overline{z_r^{-l}})
\approx
N(a_{k-1, l}),
\quad (k=1, \cdots, n_1, l=1, \cdots, n_2),
$
where $\{a_{k-1,l}\}$ are the impulse responses defined in equation (\ref{Eq3.1}).

(iv) $\mu(\Phi, \Psi)\approx \tilde{\mu}$.
\end{theorem}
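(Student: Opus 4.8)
The plan is to reduce all four statements to a single identity for the $N$th roots of unity. Writing $z_r=e^{2\pi i(r-1)/N}$ and summing a finite geometric series,
$$
\sum_{r=1}^{N} z_r^{m}=\begin{cases} N, & m\in N\mathbb{Z},\\ 0, & m\notin N\mathbb{Z},\end{cases}
$$
and, since $|z_r|=1$, $\overline{z_r^{-(k-1)}}=z_r^{k-1}$. Part (i) is then immediate and in fact exact: the $(k,k')$ entry of $\Phi^{*}\Phi$ is $\sum_{r=1}^{N}z_r^{k-1}z_r^{-(k'-1)}=\sum_{r=1}^{N}z_r^{k-k'}$, and for $1\le k,k'\le n_1$ with $N\ge n_1$ one has $|k-k'|<N$, so the identity yields $N\delta_{kk'}$ and hence $\Phi^{*}\Phi=N I_{n_1}$.

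For parts (ii) and (iii) I would substitute the Laurent expansion $\psi_l(z)=\sum_{d\ge 0}a_{dl}z^{-d}$ supplied by the preceding theorem. Its explicit residue formula exhibits the $d$-dependence solely through the factor $\xi_{j'}^{d-1}$, so with $\rho:=\max_{j\le n_2}|\xi_j|<1$ there are constants $C_l$ with $|a_{dl}|\le C_l\rho^{d-1}$; in particular $\sum_d|a_{dl}|<\infty$, which legitimizes interchanging the finite $r$-sum with the $d$-sums below. For (ii),
$$
(\Psi^{*}\Psi)_{kl}=\sum_{r=1}^{N}\overline{\psi_k(z_r)}\,\psi_l(z_r)=\sum_{d,d'\ge 0}\overline{a_{d'k}}\,a_{dl}\sum_{r=1}^{N}z_r^{d'-d}=N\!\!\sum_{d'-d\in N\mathbb{Z}}\!\!\overline{a_{d'k}}\,a_{dl};
$$
the diagonal $d'=d$ contributes $N\sum_d\overline{a_{dk}}\,a_{dl}=N\delta_{kl}$ by the orthonormality of the TM basis under (\ref{innerproduct}) (Parseval in the coefficients), while every surviving off-diagonal term has $|d'-d|\ge N$ and the whole correction is $O(N\rho^{N})$, giving $\Psi^{*}\Psi\approx N I_{n_2}$. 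The same substitution gives
$$
(\Phi^{*}\Psi)_{kl}=\sum_{r=1}^{N}z_r^{k-1}\psi_l(z_r)=\sum_{d\ge 0}a_{dl}\sum_{r=1}^{N}z_r^{(k-1)-d}=N\sum_{j\ge 0}a_{k-1+jN,\,l},
$$
whose leading ($j=0$) term is $Na_{k-1,l}$ and whose tail $N\sum_{j\ge 1}a_{k-1+jN,\,l}$ is again $O(N\rho^{N})$, establishing $\Phi^{*}\Psi\approx N(a_{k-1,l})$.

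Part (iv) then follows by assembling these into the definition (\ref{muco}): (i) gives $\|\Phi_k\|=\sqrt{N}$, (ii) gives $\|\Psi_l\|\approx\sqrt{N}$, and (iii) gives $|\langle\Phi_k,\Psi_l\rangle|\approx N|a_{k-1,l}|$, so
$$
\mu(\Phi,\Psi)=\max_{k,l}\frac{|\langle\Phi_k,\Psi_l\rangle|}{\|\Phi_k\|\,\|\Psi_l\|}\approx\max_{k,l}|a_{k-1,l}|=\sup_{d,l}|a_{dl}|=\tilde{\mu},
$$
where the final identification uses the geometric decay of $a_{dl}$ to match the finite maximum over $k\le n_1$ with the supremum in (\ref{Eq3.2}) once $n_1,n_2$ are large. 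I expect the main obstacle to be making the three ``$\approx$'' signs precise, i.e. controlling the aliasing tails uniformly: one must extract from the residue formula a decay bound $|a_{dl}|\le C_l\rho^{d-1}$ with $\rho<1$ strong enough to force every correction term to vanish as $N\to\infty$, which is exactly what converts each exact root-of-unity identity into the claimed large-$N$ approximation.
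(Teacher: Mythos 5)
Your proposal is correct, and it is in fact tighter than the paper's own argument. The skeleton is the same --- expand $\psi_l$ in its impulse response (\ref{Eq3.1}), interchange the sums, use orthogonality of the sampled exponentials, and assemble (i)--(iii) into (iv) via (\ref{muco}) --- but the key lemma differs. The paper obtains the needed orthogonality from a quadrature argument: $\frac{1}{N}\sum_{r=1}^N z_r^{-(k-1)}\overline{z_r^{-(l-1)}}$ is a Riemann sum for the integral (\ref{innerproduct}), hence tends to $\delta_{kl}$, and this limit is then pushed through the infinite sums over $d,d'$. You instead use the exact root-of-unity identity $\sum_{r=1}^N z_r^m=N$ if $N\mid m$ and $0$ otherwise. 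This buys three things. First, (i) becomes exact: $\Phi^{*}\Phi=NI_{n_1}$ for $N\ge n_1$, not merely approximate. Second, and more importantly, it exposes and controls a step the paper glosses over: for fixed $N$ the inner sum $\sum_r \overline{z_r^{-d'}}z_r^{-d}$ equals $N$ on the whole residue class $d'\equiv d \pmod N$, not only at $d'=d$, so the Kronecker-delta limit cannot simply be interchanged with the infinite sums over $d,d'$; the discarded aliasing terms must be shown to vanish. Your geometric-decay bound $|a_{dl}|\le C_l\rho^{d-1}$, read off from the residue formula of the preceding theorem, does exactly this, yielding explicit $O(N\rho^N)$ corrections in (ii) and (iii) and turning each unquantified ``$\approx$'' into a genuine asymptotic statement. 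Third, your closing remark in (iv) --- that identifying $\max_{k\le n_1,\,l\le n_2}|a_{k-1,l}|$ with $\tilde{\mu}=\sup_{d,l}|a_{dl}|$ of (\ref{Eq3.2}) requires the decay of $a_{dl}$ and $n_1,n_2$ large --- makes explicit an identification the paper performs silently. One small caveat: your decay bound, like the paper's residue formula, assumes distinct poles; for repeated poles the formula changes but a bound of the form $|a_{dl}|\le C_l\, d^{m}\rho^{d}$ with $\rho<1$ still holds, so the aliasing estimates survive unchanged.
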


\begin{proof}
The integral definition of inner product in (\ref{innerproduct}) shows that
when $N$ is sufficiently large,
$$
\frac{1}{2\pi} \sum_{r=1}^N z_r^{-(k-1)} \overline{z_r^{-(l-1)}}\frac{2\pi}{N}
\rightarrow
\langle z^{-(k-1)},z^{-(l-1)}\rangle
=\delta_{kl},
$$
where the kronecker symbol $\delta_{kl}$ equals 1 if $k=l$ and 0 if $k \neq l$.

Then the $(k, l)$ element of $\Phi^{*} \Phi$ is
$$
\sum_{r=1}^N \overline{z_r^{-(k-1)}}{z_r^{-(l-1)}}
=\overline{\sum_{r=1}^N z_r^{-(k-1)}\overline{{z_r^{-(l-1)}}}}
\rightarrow
N \delta_{kl},
$$
which implies (i).

The $(k, l)$ element of $\Psi^{*} \Psi$ is
\begin{eqnarray*}
\sum_{r=1}^N \overline{\psi_k(z_r)} \psi_l(z_r)
&=&\sum_{r=1}^N \overline{\sum_{d'=0}^{\infty} a_{d'k} z_r^{-d'}} \sum_{d=0}^{\infty} a_{dl} z_r^{-d}
=\sum_{d'=0}^{\infty} \sum_{d=0}^{\infty} a_{d'k} a_{dl} \sum_{r=1}^N \overline{z_r^{-d'}} z_r^{-d}\\
& \rightarrow &
N\sum_{d=0}^{\infty} a_{dk} a_{dl}
=N \delta_{kl},
\end{eqnarray*}
the last equation is based on the orthonormality of $\{\psi_l(z)\}$,
\begin{eqnarray*}
\langle \psi_k(z), \psi_l(z)\rangle
&=& \langle \sum_{d'=0}^{\infty}a_{d'k}z^{-d'}, \sum_{d=0}^{\infty}a_{dl}z^{-d} \rangle\\
&=& \sum_{d'=d}^{\infty} \langle a_{d'k}z^{-d'},a_{d'l}z^{-d} \rangle + \sum_{d' \neq d} \langle a_{d'k}z^{-d'},a_{dl}z^{-d} \rangle\\
&=& \sum_{d=0}^{\infty} a_{dk} a_{dl} \langle z^{-d}, z^{-d} \rangle + \sum_{d'\neq d} a_{d'k} a_{dl} \langle z^{-d'},z^{-d} \rangle\\
&=& \sum_{d=0}^{\infty} a_{dk} a_{dl}=\delta_{kl},
\end{eqnarray*}
which implies (ii).

As for (iii), the $(k, l)$ element of $\Phi^{*} \Psi$ is
$$
\sum_{r=1}^N \overline{z_r^{-(k-1)}} \psi_l(z_r)
=\sum_{r=1}^N \overline{z_r^{-(k-1)}}\sum_{d=0}^{\infty}a_{dl} z_r^{-d}
=\sum_{d=0}^{\infty}a_{dl} \sum_{r=1}^N \overline{z_r^{-(k-1)}} z_r^{-d}
\rightarrow
N a_{(k-1),l}.
$$
From the results (i)-(iii),
we show that when $N$ is sufficiently large,
$
\|\Phi_k\| \approx \sqrt{N} \ (k=1,2,\cdots, n_1)
$
and
$
\|\Psi_l\| \approx \sqrt{N} \ (l=1,2,\cdots, n_2)
$.
And
$
{\langle \Phi_k, \Psi_l \rangle},
$
corresponding to the $(k,l)$-element of $\Phi^*\Psi$, is approximately $N a_{(k-1),l}$.
Hence from the definition in (\ref{muco}),
the mutual coherence of $[\Phi \quad \Psi]$ constructed by FIR and TM bases
approximately equals $\max_{k, l}\{|a_{(k-1), l}|\}$,
i.e., $\tilde{\mu}$ in (\ref{Eq3.2}) when $N$ is sufficiently large.
%Since $a_{0l}=0$,
%$
%\Phi^{*} \Psi
%\approx
%N(a_{kl}),
%$
%where $k=0, 1, \cdots, n_1-1, l=1, \cdots, n_2.$
\end{proof}
Denote $T_1=\{k: |\alpha_k| \neq 0\}$ and  $T_2=\{l: |\beta_l| \neq 0\}$
as the supports of coefficients $\alpha$ and $\beta$, respectively.
Let $\Phi_{T_1}$ be the $N \times |T_1|$ matrix corresponding to the columns of $\Phi$ indexed by $T_1$,
and define $\Psi_{T_2}$ similarly.

\begin{theorem}\label{Th6}
Fix a subset $T=T_1 \bigcup T_2$ of the coefficient domain,
with $T_1$ and $T_2$ being the supports of the coefficients $\alpha$ and $\beta$, respectively.
Choose a subset $\Omega$ of the measurement domain of size $|\Omega|=m$, and a sign sequence $\tau$ on $T$ uniformly at random. Suppose $m$ satisfies
$$
m \geq C \frac{1+\max\limits_{1\leq l \leq n_2}|\xi_l|}{1-\max\limits_{1\leq l \leq n_2}|\xi_l|} {\max}^2 \{|T|, \log{(\frac{n_1+n_2}{\delta})}, C_{\tilde{\mu},T,\delta}\},
$$
where
\begin{eqnarray*}
C_{\tilde{\mu},T,\delta}
=\frac{4}{\left[(\frac{1}{2} + \|\Phi_{T_1}^*\Psi_{T_2} /N\|){[2 \log{(\frac{2(n_1+n_2)}{\delta})}]}^{-\frac{1}{2}}  -  \tilde{\mu} \sqrt{|T|}\right]^2}.
\end{eqnarray*}
Then with the probability exceeding $1-6\delta$ and sufficiently large $N$,
every coefficient vector
$\theta=\begin{bmatrix}
\alpha\\
\beta
\end{bmatrix}$
supported on $T$ with sign matching $\tau$ can be recovered from solving the $\ell_1$ optimization problem
\begin{equation*}
(P_1): \
\min_{\hat{\alpha}, \hat{\beta}} \left \|
\begin{bmatrix}
\hat{\alpha}\\
\hat{\beta}
\end{bmatrix}
\right \|_{1}
\  \mbox{subject to} \  H_{\Omega}=[\Phi \quad \Psi]_{\Omega}
\begin{bmatrix}
\hat{\alpha}\\
\hat{\beta}
\end{bmatrix}
\end{equation*}
for the coefficient vector
$
\begin{bmatrix}
\hat{\alpha}\\
\hat{\beta}
\end{bmatrix}
$,
where
$H_{\Omega}= [\Phi \quad \Psi]_{\Omega}
\begin{bmatrix}
\alpha\\
\beta
\end{bmatrix}$.
\end{theorem}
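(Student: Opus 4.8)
The plan is to follow the fixed-support, random-sign dual-certificate route for basis pursuit, specialized to the composite FIR/TM system and combined with the asymptotic orthonormality of Theorem~\ref{ThOrho}. Throughout I would pass to the column-normalized matrix $A=[\Phi\ \Psi]/\sqrt{N}$; by Theorem~\ref{ThOrho} its columns are (for large $N$) unit vectors and its full Gram matrix has the block form
\[
A^{*}A\approx
\begin{bmatrix}
I_{n_1} & G\\
G^{*} & I_{n_2}
\end{bmatrix},
\qquad G=(a_{k-1,l}),
\]
whose off-diagonal entries are bounded in modulus by $\tilde{\mu}$. Since recovery of $\theta$ by $(P_1)$ is unchanged under column normalization, it suffices to certify optimality for the sampled matrix $A_{\Omega}$, and Theorem~\ref{FTU} guarantees that the sparse $\theta$ we seek is the unique sparse solution to begin with.

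First I would reduce exact recovery to an explicit \emph{dual certificate}. By convex duality for $(P_1)$, the vector $\theta$ supported on $T$ with sign pattern $\tau$ is the unique minimizer provided (a) $A_{\Omega T}$ has full column rank, and (b) there exists $\pi$ in the row space of $A_{\Omega}$ with $\pi_{T}=\tau$ and $\|\pi_{T^{c}}\|_{\infty}<1$. I would take the least-squares ansatz $\pi=A_{\Omega}^{*}A_{\Omega T}(A_{\Omega T}^{*}A_{\Omega T})^{-1}\tau$, which meets $\pi_{T}=\tau$ automatically once (a) holds. The proof then splits into controlling the restricted Gram matrix $A_{\Omega T}^{*}A_{\Omega T}$ and establishing $|\pi_{j}|<1$ for every $j\in T^{c}$.

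Second, I would bound the conditioning of the restricted Gram matrix. On $\mathbb{T}$ one has $|\psi_{l}(z)|=\sqrt{1-|\xi_{l}|^{2}}/|z-\xi_{l}|\le\sqrt{(1+|\xi_{l}|)/(1-|\xi_{l}|)}$ while $|z^{-(k-1)}|=1$, so $A$ is a bounded orthonormal system on $T_{N}$ with boundedness constant $K^{2}=(1+\max_{l}|\xi_{l}|)/(1-\max_{l}|\xi_{l}|)$. A matrix-Chernoff bound for random submatrices of bounded orthonormal systems, as in \cite{Rau:10}, then gives $\|\tfrac{N}{m}A_{\Omega T}^{*}A_{\Omega T}-A_{T}^{*}A_{T}\|\le\tfrac12$ with probability at least $1-2\delta$ whenever $m\gtrsim K^{2}|T|\log(|T|/\delta)$; combined with $A_{T}^{*}A_{T}\approx I_{|T|}$ from Theorem~\ref{ThOrho} this yields invertibility and $\|(\tfrac{N}{m}A_{\Omega T}^{*}A_{\Omega T})^{-1}\|\le 2$. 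This is the step that injects the prefactor $K^{2}=(1+\max_{l}|\xi_{l}|)/(1-\max_{l}|\xi_{l}|)$ and the entries $|T|$ and $\log((n_1+n_2)/\delta)$ into the measurement bound.

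Third --- and this is where I expect the main obstacle --- I would bound the off-support values $|\pi_{j}|$, $j\in T^{c}$, by exploiting the two independent randomizations. Conditioning on $\Omega$, I write $\pi_{j}=\langle v_{j},\tau\rangle$ with $v_{j}=(A_{\Omega T}^{*}A_{\Omega T})^{-1}A_{\Omega T}^{*}A_{\Omega,j}$; since $\tau$ is a uniform random sign vector, Hoeffding's inequality gives $\Pr(|\pi_{j}|>\tfrac12\mid\Omega)\le 2\exp(-1/(8\|v_{j}\|_{2}^{2}))$. Requiring the right-hand side to be at most $\delta/(n_1+n_2)$ and taking a union bound over the at most $n_1+n_2$ off-support indices forces $\|v_{j}\|_{2}\le\tfrac12(2\log(2(n_1+n_2)/\delta))^{-1/2}$. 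I would then estimate $\|v_{j}\|_{2}$ from above using the restricted-inverse bound of Step~2, the coherence bound $\tilde{\mu}$ on the sampled cross-correlations $\langle A_{\Omega,k},A_{\Omega,j}\rangle$ (which concentrate around their large-$N$ values by Theorem~\ref{ThOrho}), and the cross-block norm $\|\Phi_{T_1}^{*}\Psi_{T_2}/N\|$; this converts the threshold into positivity of $(\tfrac12+\|\Phi_{T_1}^{*}\Psi_{T_2}/N\|)(2\log(2(n_1+n_2)/\delta))^{-1/2}-\tilde{\mu}\sqrt{|T|}$, i.e. finiteness of $C_{\tilde{\mu},T,\delta}$, and into the lower bound $m\gtrsim K^{2}C_{\tilde{\mu},T,\delta}^{2}$. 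Collecting the failure events --- $2\delta$ for the Gram concentration, $2\delta$ for the cross-correlation concentration, and $2\delta$ for the sign Hoeffding union bound --- gives the claimed success probability $1-6\delta$, while taking the maximum of the three lower bounds on $m$ (and squaring, to absorb the conditioning losses uniformly) yields the factor $\max^{2}\{|T|,\log((n_1+n_2)/\delta),C_{\tilde{\mu},T,\delta}\}$. The genuinely hard part is this simultaneous control of both randomizations: the off-support certificate couples the conditioning of $A_{\Omega T}^{*}A_{\Omega T}$ with the random signs, so the coherence term $\tilde{\mu}\sqrt{|T|}$ and the Hoeffding scale must be balanced exactly, which is what pins down the precise form of $C_{\tilde{\mu},T,\delta}$.
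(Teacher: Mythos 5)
Your proposal and the paper's proof operate at different levels of the argument. The paper does not construct any dual certificate: its entire proof consists of invoking, as a black box, the general ORF-pair recovery theorem of \cite{XiChZh:17} --- which already has the form $m \geq C\mu_M^2\,{\max}^2\{|T|,\log(\frac{n_1+n_2}{\delta}),C_{\mu(\Phi,\Psi),T,\delta}\}$ with success probability $1-6\delta$, where $\mu_M=\max\{\mu_\Phi,\mu_\Psi\}$ is the largest entry magnitude of the two sample matrices --- and then specializing the two constants to the FIR/TM pair: $\mu_\Phi=1$ because $|z_r^{-(k-1)}|=1$; $\mu_\Psi\leq\sqrt{(1+\max_l|\xi_l|)/(1-\max_l|\xi_l|)}$ from the triangle inequality $1-|\xi_l|\leq|z_r-\xi_l|\leq 1+|\xi_l|$; and $\mu(\Phi,\Psi)\approx\tilde{\mu}$ from Theorem \ref{ThOrho}. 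Your Steps 2--3 are in effect an attempt to re-prove that cited black box (a Cand\`es--Romberg-style fixed-support, random-sign argument), while your bounded-orthonormal-system constant $K^2$ and your identification of the coherence with $\tilde{\mu}$ reproduce exactly the two computations that constitute the paper's actual proof. So your route is strictly more ambitious than the paper's, and its outline is the standard one underlying the cited result.

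However, your reconstruction has a concrete flaw precisely where the two-basis structure matters. In Step 2 you assert that ``$A_T^*A_T\approx I_{|T|}$ from Theorem \ref{ThOrho}'' and conclude $\|(\frac{N}{m}A_{\Omega T}^*A_{\Omega T})^{-1}\|\leq 2$ from a sampling deviation of $\frac12$. Theorem \ref{ThOrho} says no such thing: by its part (iii), the population Gram matrix restricted to $T=T_1\cup T_2$ has the off-diagonal block $\Phi_{T_1}^*\Psi_{T_2}/N\approx(a_{k-1,l})$, which is not small in general --- your own preamble records this, and its operator norm appears explicitly inside $C_{\tilde{\mu},T,\delta}$. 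This cross block is exactly what distinguishes the two-ortho setting from the single-basis setting (where the restricted population Gram really is the identity), so it cannot be absorbed silently: invertibility of $A_{\Omega T}^*A_{\Omega T}$ requires in addition that $\|\Phi_{T_1}^*\Psi_{T_2}/N\|$ stay bounded away from $1$ (for instance via $\|\Phi_{T_1}^*\Psi_{T_2}/N\|\leq\tilde{\mu}\sqrt{|T_1|\,|T_2|}$ together with the sparsity restriction that makes $C_{\tilde{\mu},T,\delta}$ finite), and the inverse-norm estimate feeding into $\|v_j\|_2$ must carry this block --- that is how the term $\frac12+\|\Phi_{T_1}^*\Psi_{T_2}/N\|$ actually enters the constant. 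As written, your Step 2 establishes conditioning for a Gram matrix that is not the one at hand, and the balancing in Step 3 therefore does not yet produce the stated $C_{\tilde{\mu},T,\delta}$; the repair is mechanical, but it is the one genuinely two-basis step of the whole argument.
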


\begin{proof}
From \cite{XiChZh:17}, the $\ell_1$ optimization can reconstruct
the sparse coefficient vector for general ORF pairs using random samples with high probability
when number of measurement $m$ satisfies
$$
m \geq C \mu_M^2 {\max}^2 \{|T|, \log{(\frac{n_1+n_2}{\delta})}, C_{\mu(\Phi, \Psi),T,\delta}\},
$$
where $\mu_M=\max\{\mu_\Phi, \mu_\Psi\}$ with $\mu_\Phi=\max|\Phi_{ij}|$ and $\mu_\Psi=\max|\Psi_{ij}|$
being the largest magnitude among the entries in $\Phi$ and $\Psi$, respectively.
And
$$
C_{\mu ( \Phi, \Psi ),T,\delta}
=\frac{4}{\left[(\frac{1}{2} + \|\Phi_{T_1}^*\Psi_{T_2} /N\|){[2 \log{(\frac{2(n_1+n_2)}{\delta})}]}^{- \frac{1}{2}}   -   \mu ( \Phi, \Psi ) \sqrt{|T|}\right]^2}.
$$

For the pair of orthonormal bases consisting of FIR and TM basis functions,
we have $\mu_{\Phi}=1$ since all elements in $\Phi$ has modulus equal to 1,
while for the second basis $\Psi$,
the element in $\Psi$ is $\Psi_l(z_r)$
with modulus equal to $\frac{\sqrt{1-|\xi_l|^2}}{|z_r-\xi_l|}$.
By using triangle inequality, we have
$$
1-|\xi_l|=
||z_r|-|\xi_l||
\leq |z_r-\xi_l|
\leq |z_r|+|\xi_l|=1+|\xi_l|,
$$
which implies
$$
\sqrt {\frac{1-|\xi_l|}{1+|\xi_l|}} \leq |\Psi_l(z_r)| \leq \sqrt {\frac{1+|\xi_l|}{1-|\xi_l|}}.
$$
Then
$$
\sqrt {\frac{1-\min\limits_{1\leq l \leq n_2}|\xi_l|}{1+\min\limits_{1\leq l \leq n_2}|\xi_l|}} \leq \mu_{\Psi} \leq \sqrt {\frac{1+\max\limits_{1\leq l \leq n_2}|\xi_l|}{1-\max\limits_{1\leq l \leq n_2}|\xi_l|}}.
$$
Thus $\mu_M \leq \sqrt {\frac{1+\max\limits_{1\leq l \leq n_2}|\xi_l|}{1-\max\limits_{1\leq l \leq n_2}|\xi_l|}}$.

According to Theorem \ref{ThOrho}, $\mu(\Phi, \Psi)\approx \tilde{\mu}$.
The claim then follows.
\end{proof}
Theorem \ref{Th6} shows that for most sparse coefficient vectors $\theta$ supported on a fixed (but arbitrary) set $T$,
the coefficient vectors can be recovered with overwhelming probability if the sign of $\theta$ on $T$ and the observations
$H_{\Omega}=[\Phi \quad \Psi]_\Omega \theta$ are drawn at random.

\section{Computation Issues}
{\hspace{5mm}
Notice that $H_{\Omega}$ and $[\Phi \quad \Psi]_{\Omega}$ in (\ref{Eq4.3}) are all complex-valued,
and compressed sensing with complex-valued data has been discussed in \cite{YuKhMa:12} and \cite{TiBo:13}.
A method adopted is to convert the $\ell_1$-norm minimization of complex signals to the second-order
cone programming (SOCP) \cite{MaCeWi:04}.
In this paper, we rewrite the complex-valued minimization in the real-valued form
by separating the real and imaginary parts of (\ref{Eq4.3}) and (\ref{Eq4.5}), respectively.
Denoting
$$
H_{\Omega}=H_{\Omega}^R+i H_{\Omega}^I,
$$
$$
[\Phi \quad \Psi]_{\Omega}=[\Phi \quad \Psi]_{\Omega}^R+i [\Phi \quad \Psi]_{\Omega}^I,
$$
$$
\eta=\eta^R+i \eta^I,
$$
we get the following equivalent optimization problems.

\begin{lemma}
The optimization (\ref{Eq4.3}) is equivalent to
\begin{equation} \label{Eq5.1}
\min_{\alpha, \beta}
\left \|
\begin{bmatrix}
\alpha\\
\beta
\end{bmatrix} \right \|_{1}
\
\mbox{subject to}
\
\begin{bmatrix}
H_{\Omega}^R\\ H_{\Omega}^I
\end{bmatrix}=
\begin{bmatrix}
[\Phi \quad \Psi]_{\Omega}^R \\ [\Phi \quad \Psi]_{\Omega}^I
\end{bmatrix}
\begin{bmatrix}
\alpha\\
\beta
\end{bmatrix},
\end{equation}
and the optimization (\ref{Eq4.5}) is equivalent to
\begin{equation}\label{Eq5.2}
\min_{\alpha, \beta}
\left \|
\begin{bmatrix}
\alpha\\
\beta
\end{bmatrix}  \right \|_{1}
\
\mbox{subject to}
\
 \left \|
\begin{bmatrix}
H_{\Omega}^R\\ H_{\Omega}^I
\end{bmatrix}-
\begin{bmatrix}
[\Phi \quad \Psi]_{\Omega}^R \\ [\Phi \quad \Psi]_{\Omega}^I
\end{bmatrix}
\begin{bmatrix}
\alpha\\
\beta
\end{bmatrix}
\right \|_2
\leq
\epsilon.
\end{equation}
\end{lemma}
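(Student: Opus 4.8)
The plan is to exploit the fact, established in Section 2, that the coefficient vectors $\alpha$ and $\beta$ (hence $\theta=\begin{bmatrix}\alpha\\ \beta\end{bmatrix}$) are real-valued, so that the only complex quantities appearing in (\ref{Eq4.3}) and (\ref{Eq4.5}) are the data $H_{\Omega}$ and the sampling matrix $[\Phi \quad \Psi]_{\Omega}$. Under this reality assumption a complex linear relation acting on a real vector decouples cleanly into its real and imaginary parts, while the $\ell_1$ objective is left untouched by the splitting because $\left\|\theta\right\|_1$ depends only on the (real) entries of $\theta$. The whole lemma therefore reduces to two elementary observations about how complex equations and complex norms behave when the unknown is real.

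First I would treat the equality-constrained case, i.e.\ the equivalence of (\ref{Eq4.3}) and (\ref{Eq5.1}). Writing $H_{\Omega}=H_{\Omega}^R+iH_{\Omega}^I$ and $[\Phi \quad \Psi]_{\Omega}=[\Phi \quad \Psi]_{\Omega}^R+i[\Phi \quad \Psi]_{\Omega}^I$, and using that $\theta$ is real, the product $[\Phi \quad \Psi]_{\Omega}\,\theta$ has real part $[\Phi \quad \Psi]_{\Omega}^R\,\theta$ and imaginary part $[\Phi \quad \Psi]_{\Omega}^I\,\theta$. A complex vector equation holds if and only if its real and imaginary parts hold separately, so the single constraint $H_{\Omega}=[\Phi \quad \Psi]_{\Omega}\,\theta$ is equivalent to the pair $H_{\Omega}^R=[\Phi \quad \Psi]_{\Omega}^R\,\theta$ and $H_{\Omega}^I=[\Phi \quad \Psi]_{\Omega}^I\,\theta$, which is exactly the stacked real constraint of (\ref{Eq5.1}). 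Since the feasible sets coincide and the objective is identical in both problems, the two programs have the same minimizers and the same optimal value.

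Next I would handle the inequality-constrained (BPDN) case, i.e.\ (\ref{Eq4.5}) versus (\ref{Eq5.2}). The only additional ingredient is the norm identity: for any complex vector $w=w^R+iw^I$ one has $\left\|w\right\|_2^2=\sum_r |w_r|^2=\sum_r\big[(w_r^R)^2+(w_r^I)^2\big]=\left\|\begin{bmatrix}w^R\\ w^I\end{bmatrix}\right\|_2^2$. Applying this to the residual $w=H_{\Omega}-[\Phi \quad \Psi]_{\Omega}\,\theta$, whose real and imaginary parts are $H_{\Omega}^R-[\Phi \quad \Psi]_{\Omega}^R\,\theta$ and $H_{\Omega}^I-[\Phi \quad \Psi]_{\Omega}^I\,\theta$ respectively, shows that the complex residual norm equals the Euclidean norm of the stacked real residual. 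Hence $\left\|H_{\Omega}-[\Phi \quad \Psi]_{\Omega}\,\theta\right\|_2\le\epsilon$ is equivalent to its counterpart in (\ref{Eq5.2}); the $\ell_1$ objective is again unchanged, which closes the second equivalence.

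There is no substantive obstacle here: the content is entirely the reality of $\theta$ together with the elementary fact that the modulus of a complex number is the Euclidean length of its real/imaginary pair. The only point demanding care is to state explicitly that the minimization is carried out over real $\theta$, since the decoupling of one complex equation into two real equations (and the norm identity) would fail were the unknowns allowed to be complex.
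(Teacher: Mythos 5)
Your proposal is correct and follows essentially the same route as the paper: split $H_{\Omega}$ and $[\Phi \quad \Psi]_{\Omega}$ into real and imaginary parts, use the reality of $\theta$ to decouple the complex constraint into two stacked real constraints, and note that the $\ell_1$ objective is unaffected. Your explicit statement of the norm identity $\left\|w\right\|_2^2=\left\|w^R\right\|_2^2+\left\|w^I\right\|_2^2$ for the BPDN case merely fills in the detail the paper dismisses with ``similar proof is also applicable.''
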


\begin{proof}
$H_{\Omega}=[\Phi \quad \Psi]_{\Omega}
\begin{bmatrix}
\alpha\\
\beta
\end{bmatrix}$
is equivalent to
$$
H_{\Omega}^R+i H_{\Omega}^I=([\Phi \quad \Psi]_{\Omega}^R+i [\Phi \quad \Psi]_{\Omega}^I)
\begin{bmatrix}
\alpha\\
\beta
\end{bmatrix},
$$
which is equivalent to
\begin{equation*}
\begin{cases}
H_{\Omega}^R=[\Phi \quad \Psi]_{\Omega}^R
\begin{bmatrix}
\alpha\\
\beta
\end{bmatrix} \\
H_{\Omega}^I=[\Phi \quad \Psi]_{\Omega}^I
\begin{bmatrix}
\alpha\\
\beta
\end{bmatrix}\\
\end{cases}.
\end{equation*}
This proves (\ref{Eq5.1}).
Similar proof is also applicable to (\ref{Eq5.2}).
\end{proof}

In (\ref{Eq5.1}) and (\ref{Eq5.2}), the dimension of the problem is doubled.
For brevity, we refer to (\ref{Eq5.1}) and (\ref{Eq5.2}) as {\bf two-ortho model} hereafter.

On the other hand, the separation of the real and imaginary part of sensing matrix does not
change the mutual coherence of matrix $[\Phi \quad \Psi]$.
In fact, the following holds.

\begin{lemma}\label{Thmu}
%Denote $\widetilde{[\Phi \quad \Psi]}=\begin{bmatrix} \Phi^R \quad \Psi^R \\ \Phi^I \quad \Psi^I\end{bmatrix}.$ Then
For pairs of complex matrices $\Phi$ and $\Psi$,
$
\mu(\Phi, \Psi)
=\mu
\left(
\begin{bmatrix}
\Phi^R \\
\Phi^I
\end{bmatrix},
\begin{bmatrix}
\Psi^R \\
\Psi^I
\end{bmatrix}
\right),
$
where
$
\mu(\Phi, \Psi)
$
as defined in (\ref{muco}).
\end{lemma}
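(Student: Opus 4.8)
The plan is to reduce the claimed identity to a columnwise comparison. Since the mutual coherence in (\ref{muco}) is the maximum over pairs $(k,l)$ of the ratio $|\langle \Phi_k,\Psi_l\rangle|/(\|\Phi_k\|\,\|\Psi_l\|)$, it suffices to prove that, for every fixed pair of columns, neither the numerator magnitude nor the denominator changes when each complex column $\Phi_k=\Phi_k^R+i\Phi_k^I$ is replaced by the stacked real column $\hat\Phi_k:=[(\Phi_k^R)^T,(\Phi_k^I)^T]^T$, and similarly for $\Psi_l$. Taking the maximum over $(k,l)$ afterwards then yields the equality of the two coherences.

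First I would dispose of the denominators, which is the routine part. The squared Euclidean norm of the stacked column is $\|\hat\Phi_k\|^2=\|\Phi_k^R\|^2+\|\Phi_k^I\|^2=\sum_r\big((\Phi^R_{rk})^2+(\Phi^I_{rk})^2\big)=\sum_r|\Phi_{rk}|^2=\|\Phi_k\|^2$, so that $\|\hat\Phi_k\|=\|\Phi_k\|$, and the identical computation gives $\|\hat\Psi_l\|=\|\Psi_l\|$. Thus the denominators are preserved exactly, with no further hypotheses needed.

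The numerator is where the real work lies. Expanding the real inner product of the stacked columns gives $\langle \hat\Phi_k,\hat\Psi_l\rangle=\langle \Phi_k^R,\Psi_l^R\rangle+\langle \Phi_k^I,\Psi_l^I\rangle$, and a direct expansion of the Hermitian inner product (with the conjugate in the first slot, as used in Theorem \ref{ThOrho}(iii)) shows that this equals exactly the real part of $\langle \Phi_k,\Psi_l\rangle$, while the discarded imaginary part is $\langle \Phi_k^R,\Psi_l^I\rangle-\langle \Phi_k^I,\Psi_l^R\rangle$. The crux is therefore to show that $|\langle \hat\Phi_k,\hat\Psi_l\rangle|=|\langle \Phi_k,\Psi_l\rangle|$, i.e.\ that passing to the real part does not diminish the magnitude. \emph{This step is the main obstacle}, since for arbitrary complex columns the imaginary contribution need not vanish, so the equality must be driven by the particular structure of the matrices at hand rather than by formal manipulation. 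I would close it by invoking the form of the entries: the FIR columns are sampled monomials $z_r^{-(k-1)}$ on the roots of unity and the TM columns carry the impulse-response coefficients $a_{dl}$ of (\ref{Eq3.1}), which are real-valued in the admissible real-impulse-response construction; consequently each inner product $\langle \Phi_k,\Psi_l\rangle$ is real (consistent with the value $\approx N\,a_{(k-1),l}$ recorded in Theorem \ref{ThOrho}(iii)), the imaginary term drops out, and the numerator magnitudes coincide pairwise. Matching numerators and denominators for every $(k,l)$ and taking the maximum then delivers $\mu(\Phi,\Psi)=\mu(\hat\Phi,\hat\Psi)$; the only bookkeeping to watch is fixing the conjugation convention at the outset so the split into real and imaginary parts is unambiguous.
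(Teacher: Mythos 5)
Your proposal is correct for the matrices this lemma is actually applied to, and at bottom it follows the same route as the paper: stacking preserves column norms, so everything reduces to whether stacking preserves the magnitudes of the cross inner products. The difference lies in how that crux is handled, and your treatment is the more careful one. The paper argues via Gram matrices and asserts
\begin{equation*}
\begin{bmatrix}\Phi^R & \Psi^R\\ \Phi^I & \Psi^I\end{bmatrix}^T
\begin{bmatrix}\Phi^R & \Psi^R\\ \Phi^I & \Psi^I\end{bmatrix}
=[\Phi \quad \Psi]^*[\Phi \quad \Psi],
\end{equation*}
justifying it only by $[\Phi\ \Psi]=[\Phi\ \Psi]^R+i[\Phi\ \Psi]^I$. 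But for a general complex matrix $A=A^R+iA^I$ one has
\begin{equation*}
A^*A=(A^R)^TA^R+(A^I)^TA^I+i\left[(A^R)^TA^I-(A^I)^TA^R\right],
\end{equation*}
so the asserted identity is precisely the claim that $A^*A$ is real --- which is the very thing that needs proving, and which is false for arbitrary complex matrices: take $\Phi=[1]$ and $\Psi=[i]$, for which $\mu(\Phi,\Psi)=1$ while the stacked real coherence is $0$. Hence the lemma as stated ``for pairs of complex matrices'' requires a reality hypothesis that the paper's proof silently assumes. You flagged exactly this as the main obstacle --- the stacked real inner product is only the real part of the Hermitian one --- and closed it by invoking the structure of the paper's matrices: sampling at the $N$-th roots of unity makes $\sum_r \overline{z_r^{-(k-1)}}z_r^{-d}\in\{0,N\}$, and the TM impulse responses $a_{dl}$ are real under the real-impulse-response construction the paper adopts, so every $\langle\Phi_k,\Psi_l\rangle$ is real and no magnitude is lost in passing to real parts. (Note this argument, like the paper's, needs the full-circle sample matrices $\Phi,\Psi$ of definition (\ref{muco}); for row-subsampled matrices such as $[\Phi\ \Psi]_\Omega$ the cross inner products need not be real.) What your route buys is a proof that is actually valid on the structured matrices in question, together with an explicit identification of the hypothesis the lemma needs; what the paper's Gram-matrix route buys is brevity, at the cost of an unjustified step that fails at the stated level of generality.
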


\begin{proof}
Based on the fact that $\Phi_k=\Phi_k^R+i \Phi_k^I$ and $\Psi_l=\Psi_l^R+i \Psi_l^I$, we have
$
\|\Phi_k\|
=\left\|
\begin{bmatrix}
\Phi^R_k \\
\Phi^I_k
\end{bmatrix}
\right\|
$
and
$
\|\Psi_l\|
=\left\|
\begin{bmatrix}
\Psi^R_l \\
\Psi^I_l
\end{bmatrix}
\right\|.
$

The inner products
$\langle \Phi_k, \Psi_l \rangle$ can be calculated by
the off-diagonal elements of the Gram matrix $[\Phi \quad \Psi]^*[\Phi \quad \Psi]$.
And the Gram matrix of
$
\begin{bmatrix}
\Phi^R  \quad \Psi^R \\
\Phi^I  \quad \Psi^I
\end{bmatrix}
$
is
$$
\begin{bmatrix}
\Phi^R  \quad \Psi^R \\
\Phi^I  \quad \Psi^I
\end{bmatrix}^T \!\!\!
\begin{bmatrix}
\Phi^R  \quad \Psi^R \\
\Phi^I  \quad \Psi^I
\end{bmatrix}
\!\!=\!\!
[\Phi^R \quad \Psi^R]^T [\Phi^R \quad \Psi^R]+[\Phi^I \quad \Psi^I]^T [\Phi^I \quad \Psi^I]
\!\!=\!\!
[\Phi \quad \Psi]^* [\Phi \quad \Psi],
$$
the last equality is based on $[\Phi \quad \Psi]=[\Phi \quad \Psi]^R+i [\Phi \quad \Psi]^I$.

It follows that
$
\mu(\Phi, \Psi)
=\mu
\left(
\begin{bmatrix}
\Phi^R \\
\Phi^I
\end{bmatrix},
\begin{bmatrix}
\Psi^R \\
\Psi^I
\end{bmatrix}
\right).
$
\end{proof}
Note that sampling on the upper unit circle instead of the entire unit circle is sufficient to reconstruct the coefficients due to the separation of the real and imaginary part of the model.
In fact, for the structure of matrix $[\Phi \quad \Psi]$ satisfying (\ref{total sample matrix}),
we have
$$\overline{z_r^{-(k-1)}}=(\bar{z}_r)^{-(k-1)}$$
and
$$\overline{\psi_l(z_r)}=\psi_l(\bar{z}_r)$$ for the real pole in the TM basis.
This means that
the $(N-r+2)$-th row of $\Phi$ is the conjugate of the $r$-th row
($r=2, 3, \cdots \frac{N}{2}$ when $N$ is even or $r=2, 3, \cdots \frac{N+1}{2}$ when $N$ is odd).
Hence, if we sample the conjugate rows to formulate (\ref{Eq5.1}),
the optimization will fail for the redundance in the equations.
Similarly, if the $z_r$ on the real line is sampled,
(\ref{Eq5.1}) will include a equation with all zero coefficients,
which will lead to the failure of (\ref{Eq5.1}) as well.
The strategy in this paper is to sample on the upper unit circle
 and exclude the endpoints, instead of the entire unit circle.

\section{Simulation}
{\hspace{5mm}
This section illustrates the use of the two-ortho model in the sparse system identification
with a simulation study.
Suppose an underlying transfer function is sparse under the pair of FIR and TM bases,
we implement the following experiments to demonstrate the reconstruction performance of the proposed algorithm.
The experiments include three aspects:
(a) the reconstruction of the sparse coefficients in two-ortho model;
(b) the reconstruction of underlying SISO transfer function with poles known;
(c) the reconstruction of underlying SISO transfer function with poles (of multiplicity greater than 1) partially known or unknown.

\subsection{Reconstruction of the sparse coefficients in the two-ortho basis representation}
{\hspace{5mm}
The following procedure is implemented to reconstruct the sparse coefficients of the transfer function
in the two-ortho basis representation.

Step 1:
Randomly generate an $(n_1+n_2)$-dimensional coefficient vector $\theta=[\alpha^T, \beta^T]^T$
consisting of $(s_1+s_2)$ spikes with amplitude $+1$,
and uniformly sample from the interval (0, 1) the poles $\{\xi_1,\xi_2,\cdots,\xi_{n_2}\}$ in the TM basis $\{\psi_{l}(z), l=1, 2, \cdots, n_2\}$.
Then construct the transfer function
$$
H(z)=\sum\limits_{k=1}^{n_1} \alpha_k z^{-(k-1)} + \sum\limits_{l=1}^{n_2} \beta_l \psi_{l}(z)
=[1, z^{-1}, \cdots, z^{-n_1+1}, \psi_{1}(z), \cdots, \psi_{n_2}(z)] \theta.
$$
Step 2:
Uniformly sample $z_1,z_2,\cdots,z_N$ on the upper unit circle,
and then randomly select $m$ samples to form the measurement model $H_{\Omega}=[\Phi \quad \Psi]_{\Omega} \theta$.

Step 3:
Use $\ell_1$ minimization (\ref{Eq5.1}) to solve the coefficient vector $\theta$.

The software $\ell_1$-magic is employed to recover the sparse coefficient $\theta$ from the two-ortho model.

The numbers of FIR basis and TM basis functions used are both 50,
with sparsity 3 and 2 under respective basis,
the number of measurements is $30=6(s_1+s_2)$ ($N$=4000),
and the experiment is repeated 100 times.
The relative reconstruction error of the sparse coefficients vector $\theta$ of the transfer function,
measured by
$
\frac{||\hat{\theta}-\theta||_2}{||\theta||_2},
$
is shown in Table \ref{tab:1},
with the maximal, minimal, average error of 100 trials in detail.

\begin{table}[H]
\centering
 \caption{\label{tab:1}The relative reconstruction error of sparse coefficients of the transfer function}
 \begin{tabular}{ccccc}
  \toprule
  \multirow{2}{*}{Model} & \multicolumn{3}{c} {Reconstruction error} & \multirow{2}{*}{Recover} \\
  \cmidrule(l){2-4}
  &  Max & Min & Average  &rate \\
  \midrule
 two-ortho   & 0.7004    & 6.3729e-006      & 0.0203      &   91 \%  \\
   \bottomrule
 \end{tabular}
\end{table}

The recover rate,
which is the percentage of reconstruction error below the given threshold out of the 100 trials,
is also given in Table \ref{tab:1}.
The threshold is set as 0.0005,
and the rate 91\%
shows that the optimization model (\ref{Eq5.1}) can accurately reconstruct the sparse coefficients
of the transfer function with high probability.

\subsection{Reconstruction of stable SISO system with poles known}
{\hspace{5mm}
This subsection gives examples to illustrate that
TM model \cite{XiChZh:14} and FIR model are special cases of the two-ortho model.
It shows that the redundant basis can handle the special case with only one basis.

\noindent
{\bf Example 1: TM model}

Consider an underlying SISO system with transfer function
\begin{equation}\label{Eq5.3}
H(z)  =   \frac{1.5 z - 0.871}{z^2 - 0.876 z + 0.00866}   =   \frac{1}{z-0.01}  +  \frac{1}{2z-\sqrt{3}}.
\end{equation}
We follow the aforementioned Steps 2 and 3
and repeat the experiment 100 times.
The numbers of FIR basis and TM basis functions used are both 100,
with sparsity 0 and 2 under respective bases.
The number of measurements is $28=14(s_1+s_2)$ ($N$=1000).
The first two poles in the TM basis are the true poles
and the others are all set to a constant, say 0.
The reconstruction error is measured by the $H_2$ norm of the difference between the original and reconstructed transfer functions.

The reconstruction performances,
including reconstruction error
and recover rate (for the threshold 0.0005),
are shown in Table \ref{tab:2}
and compared with the TM model (i.e. no FIR basis terms in two-ortho model) discussed in \cite{XiChZh:14},
whose sparsity and number of measurements are the same as those of two-ortho model.

\begin{table}[H]
\begin{center}
 \caption{\label{tab:2}Comparison between two-ortho and TM models}
 \begin{tabular}{ccccc}
  \toprule
  \multirow{2}{*}{Model} & \multicolumn{3}{c} {Reconstruction error} & \multirow{2}{*}{Recover} \\
  \cmidrule(l){2-4}
  &  Max & Min & Average  & rate\\
  \midrule
 two-ortho    &0.9274   &3.8716e-006    &0.0945      & 88\%  \\
   TM           &0.9415  &1.7261e-006   &0.0735      & 89\%   \\
   \bottomrule
 \end{tabular}
  \end{center}
\end{table}
%\begin{center}
%\begin{figure}[hb]
%\centering
%\includegraphics[height=2in, width=4.5in]{e2_f1.eps}
%\caption{
%Histogram of reconstruction error of two-ortho model and TM model.
%}
%\label{fig_sim}
%\end{figure}
%\end{center}
As seen from Table \ref{tab:2},
the two-ortho model can accurately reconstruct the original transfer function with a high probability,
which is only slightly lower than that of the TM model for the concatenation of FIR bases.
This shows that TM model is indeed a special case of two-ortho model.

\noindent
{\bf Example 2: FIR model}

Next, we will give another example to show that
the sparse FIR model (i.e. no TM basis terms in two-ortho model) is also a special case of two-ortho model.

Consider the transfer function
$$H(z)=1/z^3+1/z^5+3/z^8.$$
The numbers of FIR basis and TM basis functions used are both 100,
with sparsity 3 and 0 under respective bases.
To avoid the TM basis degenerating to FIR basis,
the first three poles in TM basis are randomly chosen in the interval (0, 1)
and the others are all zero.
The number of measurements is $30=10(s_1+s_2)$ ($N$=1000).
The reconstruction performance is shown in Table \ref{tab:3},
including the $H_2$-norm of reconstruction error
and comparison to the sparse FIR model with the same sparsity and measurements as those of the two-ortho model.
The threshold used is 0.0005.

\begin{table}[H]
\centering
 \caption{\label{tab:3}Comparison between two-ortho and FIR models}
 \begin{tabular}{ccccc}
  \toprule
  \multirow{2}{*}{Model} & \multicolumn{3}{c} {Reconstruction error} & \multirow{2}{*}{Recover} \\
  \cmidrule(l){2-4}
  &  Max & Min & Average  & rate\\
  \midrule
 two-ortho    &0.5512e-003    &0.0063e-003    &0.0521e-003      & 99\%  \\
 FIR          &0.1838e-003    &0.0049e-003   &0.0286e-003   & 100\%   \\
   \bottomrule
 \end{tabular}
\end{table}
As seen from Table~\ref{tab:3},
two-ortho model and sparse FIR model can both accurately recover the transfer function with very high probability.
This shows that sparse FIR model is also a special case of two-ortho model.

\subsection{Reconstruction of stable SISO system with poles not exactly known}
{\hspace{5mm}
The above subsections are based on the assumption that
all the poles given in the TM basis are the true poles of the transfer function,
which is often unrealistic.
In this subsection, we consider the situations when the poles are partially known or fully unknown,
which is difficult to handle with either TM model or FIR model alone.

\noindent
{\bf Example 3: Reconstruction with part of the poles available}

Continue to consider the underlying SISO system (\ref{Eq5.3}).
The numbers of FIR basis and TM basis functions used are both 100,
with sparsity 3 and 2 under respective bases.
The number of measurements is $40=8(s_1+s_2)$ ($N$=1000).
Here we assume one of the poles $\sqrt{3}/2$ is known.
We repeated the experiment for 100 times
and compared with the sparse FIR model with 500 FIR basis functions in Table~\ref{tab:4}.

\begin{table}[H]  %用大写h
\begin{center}
 \caption{\label{tab:4}Comparison between two-ortho model and FIR models}
 \begin{tabular}{cccccccc}
  \toprule
 \multirow{2}{*}{Model} & \multirow{2}{*}{Sparsity} &   \multirow{2}{*}{Measurement } & \multicolumn{3}{c}{ Reconstruction error} & \multirow{2}{*}{Recon. } & \multirow{2}{*}{Recover} \\
  \cmidrule(l){4-6}
  & &$\sharp$ & Max & Min & Average & Order & rate \\
  \midrule
 two-ortho   &  5     &  30  &0.9203    &0.0001    &0.0453    & 3  &94\% \\
 FIR         &  30    & 180     &0.2534    &0.0002    &0.0207  & 60 &45\%  \\
   \bottomrule
 \end{tabular}
 \end{center}
\end{table}

%\begin{center}
%\begin{figure}[hb]
%\centering
%\includegraphics[height=1.8in, width=4.5in]{e2_f2.eps}
%\caption{
%Histogram of reconstruction error of two-ortho model and FIR model.
%}
%\label{fig_sim}
%\end{figure}
%\end{center}
%\begin{center}
%\begin{figure}[hb]
%\centering
%\includegraphics[height=1.8in, width=4in]{e3_f1.eps}
%\caption{
%Histogram of reconstruction error of two-ortho model and sparse FIR model.
%}
%\label{fig_sim}
%\end{figure}
%\end{center}

As shown in Table~\ref{tab:4},
the recover rate (for the threshold 0.0005) is
$94\%$ for two-ortho model.
This means that if one of the poles in the TM basis is the true pole ($\sqrt{3}/2$),
the optimization model (\ref{Eq5.1}) can reconstruct the original transfer function accurately with very high probability.

The reconstruction order in Table \ref{tab:4} is with respect to the minimum error from the 100 trials.
Although the maximal and average errors of sparse FIR model are less than those of two-ortho model,
 it needs more than 6 times measurements to get the reconstruction even with much higher order.
Hence, when the poles are partially known,
the two-ortho model has much better performance than sparse FIR model,
in terms of recover rate, number of measurements and reconstruction order.

\noindent
{\bf Example 4:  Reconstruction with all the poles unknown}

Consider the underlying SISO system (\ref{Eq5.3}).
Here the poles are all unknown,
we set different poles with different multiplicity to exemplify the performance of the two-ortho model.
The numbers of FIR basis and TM basis functions used are both 100,
with sparsity 3 and 6 under respective bases.
The number of measurements is $54=6(s_1+s_2)$ ($N$=1000).
We repeated the experiment for 100 times.
The reconstruction error of two-ortho model, with respect to different poles (0.9, 0.85, 0.8)
and multiplicity (2 to 6) in TM basis,
is shown in Fig. \ref{fig1}.

\begin{figure}[H]
\begin{center}
\includegraphics[height=6cm, width=\textwidth]{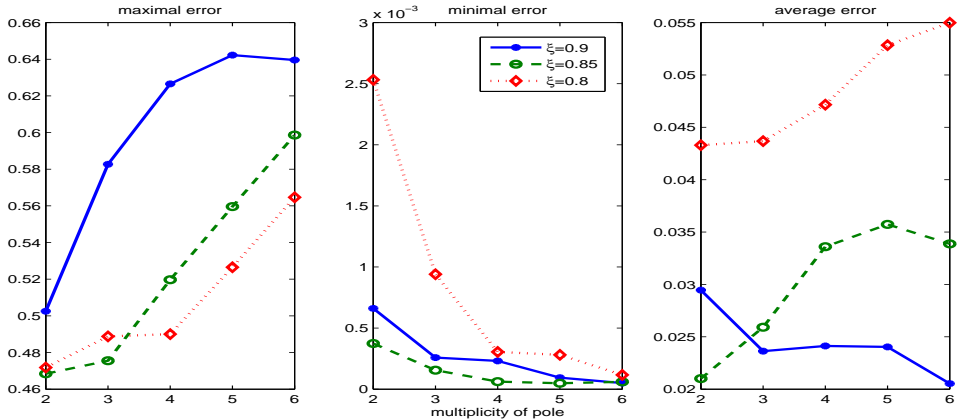}
\caption{Reconstruction error of two-ortho model.}
\label{fig1}
\end{center}
\end{figure}

As seen from Fig. \ref{fig1},
the maximal error increases with the multiplicity,
while the minimal error decreases with the multiplicity.
The average reconstruction error of the two-ortho model decreases as the multiplicity increases
when pole is 0.9,
while for pole 0.85 and 0.8, the average error increases.
This shows that higher multiplicity is necessary when the pole in TM basis is far away from the true pole.
The successful recover rate (for the threshold 0.0005)
is given in Table \ref{tab:5}.

\begin{table}[H]
\begin{center}
 \caption{\label{tab:5}Recover rate with respect to different poles and multiplicity}
 \begin{tabular}{cccccc}
  \toprule
 \multirow{2}{*}{Pole} & \multicolumn{5}{c} {Multiplicity}   \\
  \cmidrule(l){2-6}
& 2 &3    &4     &5 &6\\
  \midrule
 $\xi$ = 0.9  &0   &0.06  &0.33     &0.38 &0.31\\
 $\xi$ = 0.85 & 0.08 &0.66   &0.59     &0.43 &0.27\\
 $\xi$ = 0.8  & 0    &0   &0.06    &0.32 &0.39\\
   \bottomrule
 \end{tabular}
 \end{center}
\end{table}

From Fig. 1 and Table \ref{tab:5},
if the pole in TM basis is not the true pole of the transfer function,
we can use two-ortho model with a nearby  multiple pole  to reconstruct.
The farther the distance between the pole in TM basis and true pole,
the higher multiplicity is necessary.
And based on the recover rate,
there is an optimal multiplicity for each pole in TM basis.
The optimal multiplicity for the pole 0.85, 0.9, 0.8 is 3, 5, 6, respectively, and the corresponding reconstruction order with respect to the minimal error is 8, 5, 8, respectively.
While the order of reconstruction by sparse FIR model (see previous section) with the minimal error is 60,
and the recover rate is 45\%.
Hence, two-ortho model has better reconstruction performance
(reconstruction order, number of measurements and recover rate)
than sparse FIR model for the nearest pole 0.85,
and the farther pole will lead to lower recover rate.

\noindent
{\bf Example 5: Reconstruction with multiple poles}

We consider the situation when the transfer function has multiple poles,
which is difficult to handle with FIR model.
The transfer function of interest is
$$H(z)=\frac{1}{(z-0.1)^8}+\frac{(2-\sqrt{3}z)^4}{(2z-\sqrt{3})^5}.$$
The numbers of FIR basis and TM basis functions used are 100 and 50, respectively,
with sparsity of 3 and 2 under respective bases.
The number of measurements is $50=10(s_1+s_2)$ ($N$=1000).

{\bf 1) case 1: the dominant pole $\sqrt{3}/2$ is known.}

We set the first 5 poles in the TM basis as the true dominant poles,
and the others are all set to zero.
We repeat the experiment for 100 times.
The histogram of reconstruction error of two-ortho model,
compared with the FIR model with 500 FIR basis functions, is shown in Fig. 2.
As shown in Fig. \ref{fig2}, the errors from two-ortho model are mostly within (0, 0.1),
while the errors from FIR model scatter over (0, 1.2).
The parameters for further comparison are given in Table \ref{tab:6}.

\begin{figure}[H]
\begin{center}
\includegraphics[height=8cm, width=\textwidth]{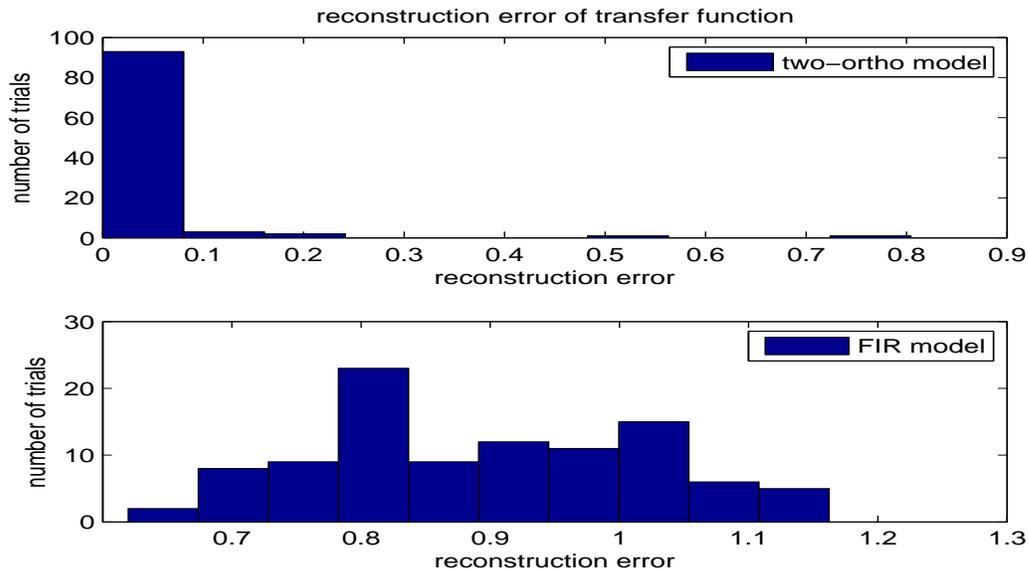}
\caption{Histogram of reconstruction error of two-ortho model and sparse FIR model.}
\label{fig2}
\end{center}
\end{figure}

\begin{table}[H]
\begin{center}
 \caption{\label{tab:6}Comparison between two-ortho and FIR models}
 \begin{tabular}{ccccccc}
  \toprule
 \multirow{2}{*}{Model} & \multirow{2}{*}{Sparsity} & \multirow{2}{*}{Measurements $\sharp$} & \multicolumn{3}{c}{Reconstruction error} & \multirow{2}{*}{Recon. Order} \\
  \cmidrule(l){4-6}
& &  &  Max & Min & Average & \\
  \midrule
two-ortho   &  5    & 50    & 0.8044    &0.0002    &0.0245     & 20  \\
FIR    &  30    & 120  &  1.1622   &0.6199    &0.8952     & 499 \\
   \bottomrule
 \end{tabular}
 \end{center}
\end{table}

The percentages of reconstruction error below 0.001 are $53\%$ and $0\%$ for two-ortho model and FIR model, respectively.
Hence two-ortho model is far better than FIR model for system with multiple poles.

{\bf 2) case 2: the poles of the original transfer function are unknown beforehand.}

The numbers of FIR and TM basis used are both 100, with the sparsity 3 and 2,
respectively and the number of measurements is 60.
The pole $\xi_1$ in TM basis is set as 0.9 with multiplicity 7 (higher than the true multiplicity),
and the others are all set to 0.
The histogram of reconstruction error with comparison to FIR model is shown in Fig. 3,
and the reconstruction performance is given in Table 7.
We also consider the situations when $\xi_1$ is set to 0.85 and 0.8, respectively,
the results are also given in Fig. \ref{fig3} and Table \ref{tab:7}.

\begin{figure}[H]
\begin{center}
\includegraphics[height=10cm, width=10cm]{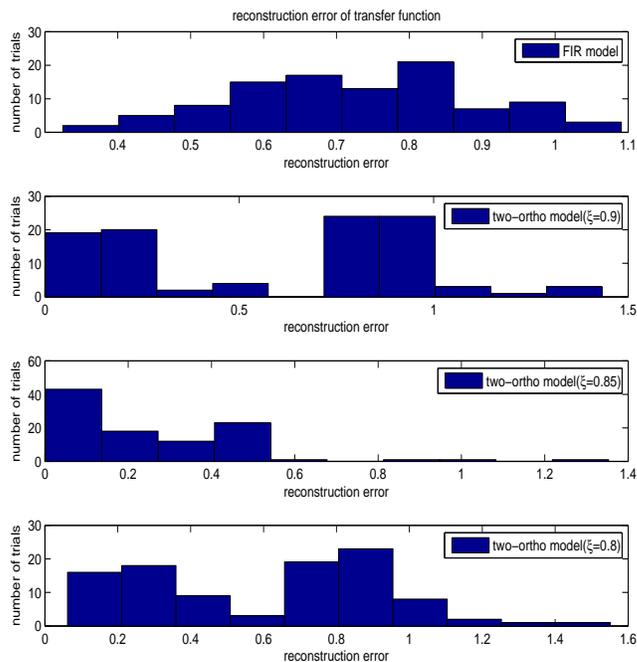}
\caption{Histogram of reconstruction error of two-ortho model and sparse FIR model (with multiple poles).}
\label{fig3}
\end{center}
\end{figure}

\begin{table}[H]
\begin{center}
 \caption{\label{tab:7}Recover rate of two-ortho and FIR models for threshold 0.005 with respect to different poles}
 \begin{tabular}{ccccccc}
  \toprule
  \multirow{2}{*}{Model}          &\multicolumn{3}{c}{Reconstruction error}  & \multirow{2}{*}{Recover }   & \multirow{2}{*}{Recon.}           \\
  \cmidrule(l){2-4}
& Max & Min & Average & rate & order\\
  \midrule
FIR     &  1.0906    &0.3247   & 0.72505    &0\%  & 499        \\
$\xi$ = 0.9    &1.4336    &0.0019   & 0.5756     &4\%    &  26        \\
$\xi$ = 0.85   &1.3534   & 0.0017    &0.2350    &13\%      &19    \\
$\xi$ = 0.8   & 1.5505   & 0.0621   & 0.6077    &0\%      & 106     \\
   \bottomrule
 \end{tabular}
 \end{center}
\end{table}

It is obvious from Fig. \ref{fig3} and Table \ref{tab:7}, that
two-ortho model and sparse FIR model both cannot handle well the system
without prior information about the multiple poles.
Even so,
the average reconstruction error and the reconstruction order of two-ortho model is less than those of sparse FIR model.
The closer the poles in the TM basis are to the true pole,
the less the average reconstruction error is,
and the order is much lower with a higher probability.

In application, the true transfer function is unknown,
so the reconstruction error cannot be calculated.
To choose a reconstruction with better performance,
the clustering criterion is based.
Here we cluster the reconstruction coefficients
from the 100 experiments when $\xi=0.85$.
The number of clusters is 10, and the histogram of clusters is shown in Fig. \ref{fig4}.

\begin{figure}[H]
\begin{center}
\includegraphics[height=5cm, width=10cm]{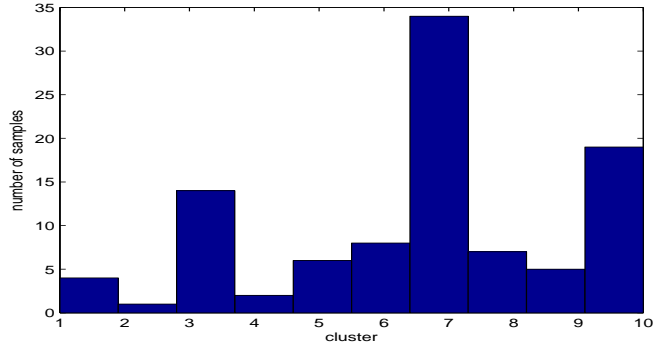}
\caption{Clusters of reconstruction coefficients.}
\label{fig4}
\end{center}
\end{figure}

From Fig. \ref{fig4}, we can see that the cluster 7 has most of the samples,
and the reconstruction order of these sample is almost 20.
Therefore, we can choose the reconstruction coefficient vector with the minimum error from this cluster.

\section{Conclusion}
{\hspace{5mm}
We have established the uniqueness of sparse representation
for a rational SISO system under FIR and TM bases.
With the uniqueness property and combining the principle of compressed sensing,
we have proposed an identification method using such two bases.
The identification method is linear in the two orthonormal bases,
and we have presented the lower bound of the number of measurement
which guarantees the efficient reconstruction
of sparse representation coefficients in pairs of FIR and TM bases by the $\ell_1$ minimization
from a limited number of observations on the upper unit circle with high probability.
It is shown that the proposed method can reconstruct the system
with much fewer measurements, much less error and higher probability
than those of FIR model, even without the prior information of poles.
Since the two-ortho model needs much fewer significant coefficients than those of FIR model,
it leads to a much lower order reconstruction than FIR model.

%\bibliographystyle{IEEEbib}        % Include this if you use bibtex
%\bibliography{myautosam}           % and a bib file to produce the

\begin{thebibliography}{99}

\bibitem{Lev:60}
M. J. Levin,
 ``Optimum estimation of impulse response in the presence of noise,"
IRE Transactions on Circuit Theory, vol. 7, no. 1, pp. 50-56, 1960.

\bibitem{RaCrAl:78}
L. Rabiner, R. Crochiere, and J. Allen, 
``FIR system modeling and identification
in the presence of noise and with band-limited inputs," IEEE Transactions on
Acoustics, Speech and Signal Processing, vol. 26, no. 4, pp. 319-333, 1978.

\bibitem{WiMcLa:76}
B. Widrow, J. M. Mccool, M. G. Larimore, and C. R. Johnson, 
``Stationary and
nonstationary learning characteristics of the LMS adaptive filter," in Proceedings
of the IEEE, 1976, vol. 64, pp. 1151-1162.

\bibitem{HuKa:94}
H\"u and H. Kan, ``The least-squares identification of FIR systems subject to worst-case noise," Systems \& Control Letters, vol. 23, no. 5, pp. 329-338, 1994.

\bibitem{Katayama:05}
T. Katayama, Subspace methods for system identification, Springer London, 2005.

\bibitem{Wah:13}
B Wahlberg, ``System identification using Laguerre models," IEEE Transactions
on Automatic Control, vol. 36, no. 5, pp. 551-562, 1991.

\bibitem{MaEkRa:98}
R. Malti, S. B. Ekongolo, and J. Ragot, ``Dynamic SISO and MISO system approx-
imations based on optimal Laguerre models," IEEE Transactions on Automatic
Control, vol. 43, no. 9, pp. 1318-1323, 1998.

\bibitem{ChVe:99}
C. T. Chou and M. Verhaegen, ``Continuous-time identification of SISO systems
using Laguerre functions," IEEE Transactions on Signal Processing, vol. 47, no.
2, pp. 349-362, 1999.

\bibitem{Wah:94}
B Wahlberg, ``System identification using Kautz models," IEEE Transactions on
Automatic Control, vol. 39, no. 6, pp. 1276-1282, 1994.

\bibitem{WaMa:96}
B. Wahlberg and P. M. M\"akil\"a, ``On approximation of stable linear dynamical
systems using Laguerre and Kautz functions," Automatica, vol. 32, no. 5, pp.
693-708, 1996.

\bibitem{NiGu:97}
B. Ninness and F. Gustafsson, ``A unifying construction of orthonormal bases for
system identification," IEEE Transactions on Automatic Control, vol. 42, no. 4,
pp. 515-521, 1997.

\bibitem{AkNi:98}
H. Akcay and B. Ninness, ``Rational basis functions for robust identification from
frequency and time-domain measurements," in American Control Conference,
1998, pp. 1101-1117.

\bibitem{Nin:96}
B. Ninness, ``Frequency domain estimation using orthonormal bases," 
in Proceedings of IFAC World Congress, 1996, pp. 381-386.

\bibitem{WaPa:96}
N. F. D. Ward and J. R. Partington, ``Robust identification in the disc algebra
using rational wavelets and orthonormal basis functions," International Journal
of Control, vol. 64, no. 3, pp. 409-423, 1996.

\bibitem{GuBu:03}
P. V. Gucht and A. Bultheel, ``Orthogonal rational functions for system identi-
fication: numerical aspects," IEEE Transactions on Automatic Control, vol. 48,
no. 4, pp. 705-709, 2003.

\bibitem{NiGoWe:02}
B. Ninness, J. C. Gomez, and S. Weller, ``MIMO system identification using
orthonormal basis functions," in Proceedings of the IEEE Conference on Decision
and Control, 2002, pp. 703-708.

\bibitem{MiQi:12}
M. Wen and Q. Tao, ``Frequency-domain identification: an algorithm based on an
adaptive rational orthogonal system," Automatica, vol. 48, no. 6, pp. 1154-1162,
2012.

\bibitem{ChMaZh:15}
Q. Chen, W. Mai, L. Zhang, and W. Mi, ``System identification by discrete rational
atoms," Automatica, vol. 56, no. Supplement C, pp. 53-59, 2015.

\bibitem{Oht:11}
Y. Ohta, ``Stochastic system transformation using generalized orthonormal basis
functions with applications to continuous-time system identification," Automatica,
vol. 47, no. 5, pp. 1001-1006, 2011.

\bibitem{HiMe:12}
E. Hidayat and A. Medvedev, ``Laguerre domain identification of continuous linear
time-delay systems from impulse response data," Automatica, vol. 48, no. 11, pp.
2902-2907, 2012.

\bibitem{TiSc:14}
K. Tiels and J. Schoukens, ``Wiener system identification with generalized orthonormal basis functions," Automatica, vol. 50, no. 12, pp. 3147-3154, 2014.

\bibitem{Heu:05}
Peter S. C. Heuberger, Modelling and identification with rational orthogonal basis
functions, Springer London, 2005.

\bibitem{Don:06}
D. L. Donoho, ``Compressed sensing," IEEE Transactions on Information Theory,
vol. 52, no. 4, pp. 1289-1306, 2006.

\bibitem{CaRoTa:06a}
E. J. Cand\`es, J. Romberg, and T. Tao, ``Robust uncertainty principles: exact signal reconstruction from highly incomplete frequency information," IEEE
Transactions on Information Theory, vol. 52, no. 2, pp. 489-509, 2006.

\bibitem{CaTa:06}
E. J. Cand\`es and T. Tao, ``Near-optimal signal recovery from random projections:
universal encoding strategies," IEEE Transactions on Information Theory, vol.
52, no. 12, pp. 5406-5425, 2006.

\bibitem{CaRoTa:06b}
E. J. Cand\`es, J. Romberg, and T. Tao, ``Stable signal recovery from incomplete and
inaccurate measurements," Communications on Pure \& Applied Mathematics,
vol. 59, no. 8, pp. 410-412, 2006.

\bibitem{GuJiMe:09}
Y. Gu, J. Jin, and S. Mei, ``$l_0$ norm constraint LMS algorithm for sparse system
identification," IEEE Signal Processing Letters, vol. 16, no. 9, pp. 774-777, 2009.

\bibitem{ChGuHe:09}
Y. Chen, Y. Gu, and A. O. Hero, ``Sparse LMS for system identification," in
IEEE International Conference on Acoustics, Speech and Signal Processing, 2009,
pp. 3125-3128.

\bibitem{KaMiBa:11}
N. Kalouptsidis, G. Mileounis, B. Babadi, and et al, ``Adaptive algorithms for
sparse system identification," Signal Processing, vol. 91, no. 8, pp. 1910-1919,
2011.
\bibitem{KoSlTh:10}
Y. Kopsinis, K. Slavakis, and S. Theodoridis, ``Online sparse system identifica-
tion and signal reconstruction using projections onto weighted $\ell_1$ balls," IEEE
Transactions on Signal Processing, vol. 59, no. 3, pp. 936-952, 2010.

\bibitem{SlKoTh:10}
K. Slavakis, Y. Kopsinis, and S. Theodoridis, ``Adaptive algorithm for sparse system identification using projections onto weighted $\ell_1$ balls," in IEEE International
Conference on Acoustics, Speech and Signal Processing, 2010, pp. 3742-3745.

\bibitem{ElBr:02}
M. Elad and A. M. Bruckstein, ``A generalized uncertainty principle and sparse
representation in pairs of bases," IEEE Transactions on Information Theory, vol.
48, no. 9, pp. 2558-2567, 2002.

\bibitem{DoEl:03}
D. L. Donoho and M. Elad, ``Optimally sparse representation in general
(nonorthogonal) dictionaries via $\ell_1$ minimization," in Proceedings of the National
Academy of Sciences of the United States of America, 2003, vol. 100, pp. 2197-
2202.

\bibitem{GrNi:03}
R. Gribonval and M. Nielsen, ``Sparse representations in unions of bases," IEEE
Transactions on Information Theory, vol. 49, no. 12, pp. 3320-3325, 2003.

\bibitem{MaCeWi:04}
D. M. Malioutov, M. Cetin, and A. S. Willsky, ``Optimal sparse representations
in general overcomplete bases," in IEEE International Conference on Acoustics,
Speech and Signal Processing, 2004, vol. 2, pp. ii-793-6.

\bibitem{Rau:10}
H. Rauhut, ``Compressive sensing and structured random matrices," Radon Series
on Computational and Applied Mathematics, vol. 9, pp. 1-94, 2010.

\bibitem{XiChZh:17}
D. Xiong, L. Chai, and J. Zhang, ``Uncertainty principle and sparse reconstruction
in pairs of orthonormal rational function bases," arXiv:1712.02981, 2017.

\bibitem{Tak:25}
S. Takenaka, ``On the orthogonal functions and a new formula of interpolation,"
Japanese Journal of Mathematics, vol. 2, pp. 129-145, 1925.

\bibitem{Mal:25}
F. Malmquist, ``Sur la d′ etermination d’une class de fonctions analytiques par leur
valeurs dans un ensemble donn′ e de points," in Proc. C. R. 6ime Cong. Math.
Scand., 1925, pp. 253-259.

\bibitem{Nat:06}
B. K. Natarajan, ``Sparse approximate solutions to linear systems," SIAM Journal
on Computing, vol. 24, no. 2, pp. 227-234, 2006.

\bibitem{ChDoSa:01}
S. Chen, D. L. Donoho, and M. A. Saunders, ``Atomic decomposition by basis
pursuit," SIAM Review, vol. 43, no. 1, pp. 129-159, 2001.

\bibitem{BoVa:04}
S. Boyd and L. Vandenberghe, Convex Optimization, Cambridge University Press,
2004.

\bibitem{YuKhMa:12}
S. Yu, A. S. Khwaja, and J. Ma, ``Compressed sensing of complex-valued data,"
Signal Processing, vol. 92, no. 2, pp. 357-362, 2012.

\bibitem{TiBo:13}
F. H. C. Tivive and A. Bouzerdoum, ``A compressed sensing method for complex-valued signals with application to through-the-wall radar imaging," in IEEE
International Conference on Acoustics, Speech and Signal Processing, 2013, pp.
2144-2148.

\bibitem{XiChZh:14}
D. Xiong, L. Chai, and J. Zhang, ``Sparse system identification using orthogonal rational functions," in Proceeding of the 11th World Congress on Intelligent
Control and Automation, 2014, pp. 2340-2345.


\end{thebibliography}
                                 % bibliography (preferred). The
                                 % correct style is generated by
                                 % Elsevier at the time of printing.
\small

%..................................................................

\end{document}